\newcommand{\nosemic}{\renewcommand{\@endalgocfline}{\relax}}%
\newcommand{\dosemic}{\renewcommand{\@endalgocfline}{\algocf@endline}}%
\let\oldnl\nl%
\newcommand{\nonl}{\renewcommand{\nl}{\let\nl\oldnl}}%
\newcommand{\Remx}[1]{\;}
\newcommand{\Remxi}[1]{}
\newtheorem{theorem}{Theorem}[section]
\newtheorem{lemma}[theorem]{Lemma}
\newtheorem{definition}{Definition}[section]
\crefname{line}{line}{lines}
\newcommand{\ceil}[1]{\ensuremath\lceil#1\rceil}
\newcommand{\floor}[1]{\ensuremath\lfloor#1\rfloor}
\newcommand{\Ohtilde}[1]{\ensuremath \tilde{\mathrm{O}}(#1)}
\newcommand{\access}{{access}}
\newcommand{\rank}{{rank}}
\newcommand{\select}{{select}}
\newcommand{\error}{\ensuremath{\epsilon}}
\newcommand{\arank}{\ensuremath{{\rank}^\approx}}
\newcommand{\qvector}[1]{QV\(_{#1}\)}
\newcommand{\bvsdsl}{sdsl\_bv}
\newcommand{\bvpasta}{pasta\_bv}
\newcommand{\wtqvector}[1]{QWM\(_{#1}\)}
\newcommand{\wtqvectorpfs}[1]{QWM\(_{#1}^{\textnormal{pfs}}\)}
\newcommand{\wtsdsl}{sdsl\_wm}
\newcommand{\wtpasta}{pasta\_wm}
\newcommand{\english}{{English}}
\newcommand{\dna}{{DNA}}
\newcommand{\cc}{{CC}}
\newcommand{\wiki}{{Wiki}}
\newcommand{\mytodo}[1]{\textcolor{red}{\textbf{TODO}: #1}}
\renewcommand{\mytodo}[1]{}
\pgfplotsset{compat=1.15}
\definecolor{my-dark-red}{RGB}{183, 28, 28}
\definecolor{my-red}{RGB}{244,67,54}
\definecolor{my-pink}{RGB}{233,30,99}
\definecolor{my-purple}{RGB}{156,39,176}
\definecolor{my-deep-purple}{RGB}{103,58,183}
\definecolor{my-indigo}{RGB}{63,81,181}
\definecolor{my-blue}{RGB}{33,150,243}
\definecolor{my-light-blue}{RGB}{3,169,244}
\definecolor{my-cyan}{RGB}{0,188,212}
\definecolor{my-teal}{RGB}{0,150,136}
\definecolor{my-green}{RGB}{76,175,80}
\definecolor{my-light-green}{RGB}{139,195,74}
\definecolor{my-lime}{RGB}{205,220,57}
\definecolor{my-yellow}{RGB}{255,235,59}
\definecolor{my-amber}{RGB}{255,193,7}
\definecolor{my-orange}{RGB}{255,152,0}
\definecolor{my-deep-orange}{RGB}{255,87,34}
\definecolor{my-brown}{RGB}{121,85,72}
\definecolor{my-grey}{RGB}{158,158,158}
\definecolor{my-blue-grey}{RGB}{96,125,139}
\definecolor{my-lipics-grey}{rgb}{0.6,0.6,0.61}
\definecolor{colorPASTA}{HTML}{444444}
\definecolor{colorSDSL}{HTML}{377EB8}
\definecolor{colorQWM256}{HTML}{A65628}
\definecolor{colorQWM512}{HTML}{4DAF4A}
\pgfplotsset{
  mark repeat*/.style={
    scatter,
    scatter src=x,
    scatter/@pre marker code/.code={
      \pgfmathtruncatemacro\usemark{
        or(mod(\coordindex,#1)==0, (\coordindex==(\numcoords-1))
      }
      \ifnum\usemark=0
        \pgfplotsset{mark=none}
      \fi
    },
    scatter/@post marker code/.code={}
  },
  major grid style={thin,dotted},
  minor grid style={thin,dotted},
  ymajorgrids,
  yminorgrids,
  every axis/.append style={
    line width=0.7pt,
    tick style={
      line cap=round,
      thin,
      major tick length=4pt,
      minor tick length=2pt,
    },
    mark options={solid},
  },
  legend cell align=left,
  legend style={
    line width=0.7pt,
    /tikz/every even column/.append style={column sep=3mm,black},
    /tikz/every odd column/.append style={black},
    mark options={solid},
  },
  legend style={font=\small},
  title style={yshift=-2pt},
  enlarge x limits=0.04,
  every tick label/.append style={font=\footnotesize},
  every axis label/.append style={font=\small},
  every axis y label/.append style={yshift=-1ex},
  /pgf/number format/1000 sep={},
  axis lines*=left,
  xlabel near ticks,
  ylabel near ticks,
  axis lines*=left,
  label style={font=\footnotesize},
  tick label style={font=\footnotesize},
  plotLatency/.style={
    width=49.5mm,
    height=57.5mm,
  },
  plotLatencySmallWide/.style={
    width=45.5mm,
    height=57.5mm,
  },
  plotLatencySmall/.style={
    width=42.5mm,
    height=45.5mm,
  },
}
\newcommand{\longversion}[1]{#1}
\newcommand{\longandshortversion}[2]{#1}
\newlength{\figurewidth}
\newlength{\smallfigurewidth}
\begin{document}

\title
{\large
\textbf{Faster Wavelet Tree Queries}
}
\date{}
\author{%
Matteo Ceregini$^{\ast}$, Florian Kurpicz$^{\dag}$, and Rossano Venturini$^{\ast}$\\[0.5em]
{\small\begin{minipage}{\linewidth}\begin{center}
\begin{tabular}{ccc}
$^{\ast}$University of Pisa & \hspace*{0.5in} & $^{\dag}$Karlsruhe Institute of Technology \\
\url{ceregini@studenti.unipi.it} && \url{kurpicz@kit.edu}\\
\url{rossano.venturini@unipi.it} && 
\end{tabular}
\end{center}\end{minipage}}
}

\maketitle
\thispagestyle{empty}

\begin{abstract}
  \noindent
  Given a text, rank and select queries return the number of occurrences of a character up to a position (rank) or the position of a character with a given rank (select).
  These queries have applications in, e.g., compression, computational geometry, and most notably pattern matching in the form of the backward search---the backbone of many compressed full-text indices.
  Currently, in practice, for text over non-binary alphabets, the wavelet tree is probably the most used data structure for rank and select queries.\longandshortversion{

  In this paper, we present techniques to speed up queries by a factor of two (access and select) up to three (rank), compared to the wavelet tree implementation contained in the widely used Succinct Data Structure Library (SDSL).
  To this end, we change the underlying tree structure from a binary tree to a 4-ary tree and reduce cache misses by approximating rank queries using a predictive model to prefetch all data required for the actual rank query.}{ Our improved wavelet tree representation and predictive model allows us to speed up queries by a factor of 2--3.}
\end{abstract}

\section{Introduction}
Wavelet trees~\cite{GrossiGV2003WaveletTree} are a compressible self-indexing rank and select data structure, i.e., they can answer rank (number of occurrences of symbol up to position \(i\)) and select (position of \(i\)-th occurrence of symbol) queries, while still allowing to access the text.
This makes them an important building block for compressed full-text indices, e.g., the FM-index~\cite{FerraginaM2000FMIndex} or the r-index~\cite{GagieNP2020FullyFunctionalRIndex}, where they are used to answer rank queries during the pattern matching algorithm---the backwards search\longversion{---cf. ~\cref{ex:backward_search}}.\longandshortversion{

Wavelet trees have many applications, which are discussed in multiple surveys~\cite{FerraginaGM2009MyriadWT,GrossiVX2011WaveletSurvey,Makris2012WaveletSurvey,Navarro2014WaveletForAll}.
Due to the plethora of applications, a lot of research has been focused on the efficient construction of wavelet trees in both practice and theory.
We give an overview of the state-of-the-art in \cref{sec:related_work}.}{ More applications are discussed in multiple surveys~\cite{FerraginaGM2009MyriadWT,GrossiVX2011WaveletSurvey,Makris2012WaveletSurvey,Navarro2014WaveletForAll}. Due to the plethora of applications, a lot of research has been focused on the efficient construction of wavelet trees.}
However, there exists barely any research focusing on the query performance of wavelet trees.
While there exist alternative representations of the wavelet tree (namely the wavelet matrix\longversion{, see \cref{sec:preliminaries}}) that provide better practical query performance, the better query performance is more of a byproduct of a space efficient representation for large alphabets.\longandshortversion{

The main building block of wavelet trees (and wavelet matrices) are bit vectors with binary rank and select support.
There exist many different approaches tuning the rank and select support for query time and/or space overhead.
Faster binary rank and select queries directly translate to faster queries on wavelet trees.
We refer to \cref{sec:related_work} for an overview of binary rank and select support for bit vectors.
However, improving only the binary rank and select data structure still not fully utilizes the full range of optimizations when it comes to answering queries using wavelet trees.}{ The main building block of wavelet trees are bit vectors with binary rank and select support. There exist many different approaches tuning the rank and select support for query time and/or space overhead. Faster binary rank and select queries directly translate to faster queries on wavelet trees. However, improving only the binary rank and select data structure still not fully utilizes the full range of optimizations.}

\longversion{
\begin{algorithm2e}[b]
  \Fn{BackwardsSearch{$(P[1..m],C,wt)$}}{
    
    $s=1, e=n$\;
    \For{$i=m,\dots,1$}{
      \(s=C[P[i]]+wt.rank_{P[i]}(s-1)+1\)\;
      \(e=C[P[i]]+wt.rank_{P[i]}(e)\)\;
      \If{\(s>e\)}{
        \Return \(\emptyset\)\;
      }
    }
    \Return \([s,e]\)\;
  }
  \vspace{.25cm}
\caption{Backward search for a pattern \(P\) of length \(m\) using the wavelet tree \(wt\) over the Burrows-Wheeler transform of the text and the exclusive prefix sum over the histogram of characters \(C\).\label{ex:backward_search}}
\end{algorithm2e}
}

\longandshortversion{\subsection*{Our Contributions}}{\paragraph{Our Contributions.}}
\longandshortversion{Wavelet trees usually utilize binary trees as underlying tree structure.
We show that using a 4-ary tree as underlying tree structure (see \cref{sec:four_ary_wavelet_trees}) results in a query speedup of up to 2 for all queries compared to its competitor implemented in the widely used \emph{Succinct Data Structure Library} (SDSL)~\cite{GogBMP2014SDSL}.}{We show that using a 4-ary wavelet tree instead of the usual binary wavelet tree results in a query speedup of up to 2 for all queries compared to its competitor implemented in the widely used \emph{Succinct Data Structure Library} (SDSL)~\cite{GogBMP2014SDSL}.}
Furthermore, we introduce the \emph{rank with additive approximation} problem (see \cref{sec:prefetching}) and show how utilize a small prediction model to locate data necessary during rank queries.
We use this information to improve rank queries (which are required for pattern matching) even more, achieving a total speedup of up to 3, by prefetching all data necessary to answer the query\longandshortversion{}{, see \cref{sec:exeperimental_evaluation}}. \longversion{

In our experimental evaluation (see \cref{sec:exeperimental_evaluation}), we not only show these impressive speedups for such a well-researched data structure but also that our data structure requires less space and is faster to construct---making it strictly superior to its competitors.}

\longandshortversion{\section{Preliminaries}}{\paragraph{Preliminaries.}}
\label{sec:preliminaries}
A \emph{bit vector} is a text over the alphabet \(\{0,1\}\).
Given a text \(T\) of length \(n\) over an alphabet \(\Sigma=[0,\sigma)\).
For \(i\in[0,n)\) and \(\alpha\in\Sigma\), we want to answer:
\longandshortversion{
  
\begin{itemize}
\item \(rank_\alpha(i)=|\{j<i\colon T[j]=\alpha\}|\) and
\item \(select_\alpha(i)=\min\{j\colon rank_\alpha(j)=i\}\).
\end{itemize}
}{
\[rank_\alpha(i)=|\{j<i\colon T[j]=\alpha\}|\textnormal{~and~}select_\alpha(i)=\min\{j\colon rank_\alpha(j)=i\}.\]
}

\longandshortversion{Rank and select}{Both} queries on bit vectors of length \(n\) can be answered in \(O(1)\) time with \(o(n)\) additional bits \cite{ClarkM1996Select,Jacobson1989LOUDS}.
The \emph{most significant bit} (MSB) of a character is the bit with the highest value.
\longandshortversion{For simplicity, we}{We} assume that the MSB is the leftmost bit.
The \(i\)-th MSB is the bit with the \(i\)-th highest value.
A length-\(\ell\) \emph{bit-prefix} of a character are the \longandshortversion{character's}{its} \(\ell\) MSBs.

A \emph{wavelet tree} \cite{GrossiGV2003WaveletTree} is a binary tree, where each node represents a subsequence of the text.
Each node contains character with a specific length-\(k\) bit-prefix.
The root of a wavelet tree represents all characters with the length-0 bit empty prefix, i.e., all characters.
Then, whenever we visit a left child of a node that represents characters with bit-prefix \(\alpha\), the child represents character with-bit prefix \(\alpha\texttt{0}\).
The right child represents characters with bit-prefix \(\alpha\texttt{1}\).
\longversion{Alternatively, you can say that the left child represents characters that are in the lower half of the alphabet represented in its parent and the right child represents characters in the upper half.
Here, the root represents characters in the whole alphabet.

} On the \(\ell\)-th level of the tree (the root has level \(1\)), characters are represented by their \(\ell\)-th MSB.
Within a node, all represented characters are stored in a bit vector.
If we concatenate the bit vectors of all nodes on the same level, we obtain a \emph{level-wise} wavelet tree.
We say that all characters that have been represented in a node of a non-level-wise wavelet tree are in the same interval.
See \cref{fig:example_wt_wm} for an example.
\longversion{All intervals in a wavelet tree can be identified by the bit prefix of the characters represented by that interval.}
In the following, we consider a level-wise wavelet trees.

\begin{figure*}[t]
  \centering
\begin{tikzpicture}[
    parent anchor=south,
    child anchor=north,
    every node/.style={
        font=\scriptsize,
        shape=rectangle,
        draw,
        align=center,
        minimum height=1.75em,
    },
    level 1/.style={
        level distance=6ex,
        sibling distance=8em,
    },
    level 2/.style={
        level distance=6ex,
        sibling distance=4em,
    },
    level 3/.style={
        level distance=6ex,
        sibling distance=1em,
    },
]
\node(root){\textcolor{gray}{\texttt{accessandselect}}\\\texttt{000011010101001}}
    child {
        node (v1) { \textcolor{gray}{\texttt{acceadeec}}\\\texttt{000101110} }
        child { node (v3) { \textcolor{gray}{\texttt{accac}}\\\texttt{01101} } }
        child { node (v4) { \textcolor{gray}{\texttt{edee}}\\\texttt{1011} } }
    }
    child {
        node (v2) { \textcolor{gray}{\texttt{ssnslt}}\\\texttt{110101} }
        child { node (v5) { \textcolor{gray}{\texttt{nl}}\\\texttt{10} } }
        child { node (v6) { \textcolor{gray}{\texttt{ssst}}\\\texttt{0001} } }
    };

\draw[latex-latex] ([yshift=-1.25ex]v1.east) -- ([yshift=-1.25ex]v2.west);
\draw[latex-latex] ([yshift=-1.25ex]v3.east) -- ([yshift=-1.25ex]v4.west);
\draw[latex-latex] ([yshift=-1.25ex]v4.east) -- ([yshift=-1.25ex]v5.west);
\draw[latex-latex] ([yshift=-1.25ex]v5.east) -- ([yshift=-1.25ex]v6.west);

\end{tikzpicture}
\hspace{2cm}
\begin{tikzpicture}[
    parent anchor=south,
    child anchor=north,
    every node/.style={
        font=\scriptsize,
        shape=rectangle,
        draw,
        align=center,
        minimum height=1.75em,
    },
    level 1/.style={
        level distance=6ex,
        sibling distance=8em,
    },
    level 2/.style={
        level distance=6ex,
        sibling distance=4em,
    },
    level 3/.style={
        level distance=6ex,
        sibling distance=1em,
    },
]
\node(root){\textcolor{gray}{\texttt{accessandselect}}\\\texttt{000011010101001}}
    child {
        node (v1) { \textcolor{gray}{\texttt{acceadeecssnslt}}\\\texttt{000101110110101} }
        child { node (v3) { \textcolor{gray}{\texttt{accacnledeessst}}\\\texttt{011011010110001} } }
    };

\node[draw=none,node distance=2cm,right of=root] { \(\textsf{Z}[0]=9\) };
\node[draw=none,node distance=2cm,right of=v1] { \(\textsf{Z}[1]=7\) };
\node[draw=none,node distance=2cm,right of=v3] { \(\textsf{Z}[2]=7\) };

\end{tikzpicture}

\caption{Wavelet tree (left) and a wavelet matrix (right) for the text \texttt{accessandselect} over the alphabet \(\{\texttt{a}~\textcolor{gray}{(\texttt{000})_2},\texttt{c}~\textcolor{gray}{(\texttt{001})_2},\texttt{d}~\textcolor{gray}{(\texttt{010})_2},\texttt{e}~\textcolor{gray}{(\texttt{011})_2},\texttt{l}~\textcolor{gray}{(\texttt{100})_2},\texttt{n}~\textcolor{gray}{(\texttt{101})_2},\texttt{s}~\textcolor{gray}{(\texttt{110})_2},\texttt{t}~\textcolor{gray}{(\texttt{111})_2}\}\) (bit representation of characters given in gray).
  Note that we depict the text for better readability only; the text is not part of the wavelet tree or wavelet matrix.
\longversion{By concatenating all bit vectors in nodes of the wavelet tree that are connected by arrows, we obtain a level-wise wavelet tree.
On each level in the wavelet matrix, there are the same intervals as in the wavelet tree.}}
  \label{fig:example_wt_wm}
\end{figure*}

The \emph{wavelet matrix}~\cite{ClaudeNP2015WaveletMatrix} is an alternative representation of the wavelet tree.
The first level of the wavelet matrix are the MSBs of the characters, the same as the first level of the wavelet tree.
Then, to compute the next level \(\ell\), starting with the second, the text is stably sorted using the \((\ell-1)\)-th MSB as key.
Just as with the wavelet tree, the characters are represented using their \(\ell\)-th MSB on each level \(\ell\).
The order of the characters on each level is given by the stably sorted text.
Sorting the text looses the tree structure of the wavelet tree.
However, the same intervals as in the wavelet tree occur on each level, just in a bit-reversal permutation\footnote{See \url{https://oeis.org/A030109}, last accessed 2023-11-08.} order.
A comparison of the structure of a wavelet tree and a wavelet matrix can be found in \cref{fig:example_wt_wm}.
The number of zero in each level is stored in the array \(\textsf{Z}\), which are needed to answer queries using one less binary rank and/or select query per level compared to wavelet trees.
\longversion{We give the rank query algorithms for wavelet trees and matrices in \cref{ex:rank_wavelet_tree,ex:rank_wavelet_matrix}, resp.}
In the following, we use wavelet tree to refer to both wavelet tree and wavelet matrix.
\longversion{A wavelet trees for a text over an alphabet of size \(\sigma\) can answer access, rank, and select queries in \(O(\log\sigma)\) time.}

\longversion{
\begin{algorithm2e}[h!]
  \Fn{WaveletTreeRank{\(_\alpha(i)\)}}{

    \(start=0, size=n, bit\_mask=1\ll (levels-1)\)\;
    \For{\(level=0,\dots,\ceil{\log\sigma}\) and \(i>0\)}{
      \(before=bv.rank_1(start)\)\;
      \(position=bv.rank_1(start+i)-before\)\;
      \(in=bv.rank_1(start+size)-before\)\;
      \uIf{\(\alpha\&bit\_mask\)}{
        \(start = start + (size - in)\)\;
        \(size = in\)\;
        \(i=position\)\;
      }
      \uElse{
        \(size = size - in\)\;
        \(i = i - position\)\;
      }
      \(start+=n\)\;
      \(bit\_mask = bit\_mask\gg 1\)
    }
    \Return \(i\)\;
  }
  \vspace{.25cm}
  \caption{Rank query for a \emph{wavelet tree} over a text of length \(n\) over an alphabet of size \(\sigma\) where all levels are stored in one consecutive bit vector \(bv\).\label{ex:rank_wavelet_tree}}
\end{algorithm2e}

\begin{algorithm2e}[h!]
  \Fn{WaveletMatrixRank{\(_\alpha(i)\)}}{

    \(start=0, size=n, bit\_mask=1\ll (levels-1)\)\;
    \For{\(level=0,\dots,\ceil{\log\sigma}\) and \(i>0\)}{
      \(before=bv.rank_1(start)\)\;
      \(position=bv.rank_1(start+i)-before\)\;
      \(in=before - Z_1[level]\)\;
      \uIf{\(\alpha\&bit\_mask\)}{
        \(i=before\)\;
        \(start=((level + 1) n) + Z_0[level] + in\)\;
      }
      \uElse{
        \(i= i - before\)\;
        \(start=((level + 1)n)+(start - (level\cdot n)) - in \)\;
      }
      \(start+=n\)\;
      \(bit\_mask = bit\_mask\gg 1\)
    }
    \Return \(i\)\;
  }
  \vspace{.25cm}
  \caption{Rank query for a \emph{wavelet matrix} over a text of length \(n\) over an alphabet of size \(\sigma\) where all levels are stored in one consecutive bit vector \(bv\).
    Additionally, \(Z_0[\ell]\) denotes the zeros on level \(\ell\) and \(Z_1[\ell]\) denotes the number of ones before level \(\ell\) in \(bv\).\label{ex:rank_wavelet_matrix}}
\end{algorithm2e}
}

\longandshortversion{\section{Related Work}}{\paragraph{Related Work.}}
\label{sec:related_work}
\longversion{The compact and compressed representation of texts with support for access, rank, and select queries (among others) is an active field of research.
For example, bit vectors \longversion{with rank and select support, i.e., binary rank and select data structures, }are often a building block for succinct data structures.}

\longversion{\subsection*{Binary Rank and Select Data Structures}}
\longversion{For bit vectors of length \(n\), rank and select data structures with constant query time can be constructed in linear time requiring \(o(n)\) space~\cite{ClarkM1996Select,Jacobson1989LOUDS}.}
Practical and well-performing implementations of \longandshortversion{these data}{rank and select} structures can be found in the SDSL~\cite{GogBMP2014SDSL}.
The currently most space efficient rank and select support for a size-\(u\) bit vector that contains \(n\) ones requires only \(\log\binom{u}{n}+\frac{u}{\log u}+\Ohtilde{u^{\frac{3}{4}}}\) bits (including the bit vector)~\cite{Patrascu2008Succincter}.
In practice, the currently fastest select data structures are by Vigna~\cite{Vigna2008BroadwordRankSelect}.
\longversion{Allowing for multiple configurations using a tuning parameter, they outperform all other select data structures while being space-efficient.}
However, they still require much more space than the currently most space-efficient data structures\longandshortversion{ by Zhou~et~al.~\cite{ZhouAK2013PopcountRankSelect} that have recently been improved w.r.t. query throughput by Kurpicz~\cite{Kurpicz2022RankSelect}}{~\cite{Kurpicz2022RankSelect,ZhouAK2013PopcountRankSelect}}.
\longversion{There exist many more practical rank and select data structures that are outperformed by the ones mentioned above~\cite{GonzalezGMN2005PracticalRankSelect,KimNKP2005RankAndSelect,NavarroP2012CombinedSampling}.
Another line of research considers compressed~\cite{ArroyueloW2020CompressedRankSelect,BeskersF2014CompressedRankSelect,BoffaFV2022LearnedRankSelect,KarkkainenKP2014HybBitVector,RamanRS2007RRR} and mutable~\cite{PibiriK2021MutableRankSelect,Prezza2017DYNAMIC} bit vectors with rank and select support.}

\longversion{\subsection*{Wavelet Trees and Wavelet Matrices}}
\longversion{Wavelet tree construction is a well studied field.}
Let \(T\) be a text of length \(n\) over an alphabet of size \(\sigma\).
The\longversion{ asymptotically} best sequential wavelet tree construction algorithms require \(O(n\log\sigma/\sqrt{\log n})\) time~\cite{BabenkoGKS2015WT,MunroNV2016WT}.
These approaches make use of vectorized instructions\longversion{, i.e., SIMD (single instruction, multiple data), to achieve their running time}.
There also exist implementations that make use of these instructions which are available in modern CPUs~\cite{DinklageFKT2023SIMDWT,Kaneta2018VectorizedWXConstruction} and are reported to be the fastest in practice.
In shared memory, wavelet trees can be computed in \(O(\sigma+\log n)\) time requiring only \(O(n\log\sigma/\sqrt{\log n})\) work~\cite{Shun2020ImprParWTandRankSelect}.
In practice, the fastest construction algorithms are based on domain decomposition~\cite{LabeitSB2017ParallelWXSACA,SepulvedaEFS2017DomainDecomposition}\longandshortversion{, where partial wavelet trees are computed in parallel and are then merged also in parallel, using a bottom-up construction for the partial wavelet tree construction~\cite{DinklageEFKL2021PracticalWaveletTrees}}{ and utilize a bottom-up construction as sequential base-case~\cite{DinklageEFKL2021PracticalWaveletTrees}}.
\longversion{Wavelet trees can also be computed in other models of computation, e.g., distributed~\cite{DinklageFK2020DistributedWX} and external memory~\cite{EllertK2019ExternalWX}.}
To compress a wavelet tree, it is constructed for the Huffman-compressed text.\longversion{\footnote{Bit-wise negated canonical Huffman codes are required~\cite{DinklageEFKL2021PracticalWaveletTrees}.}}
The bit vectors in the Huffman-shaped wavelet tree requires \(n\ceil{H_0(T)}\) bits of space, where \(H_0\) is the zeroth order entropy of the text.
\longandshortversion{A fully functional wavelet tree requires binary rank and select support on the bit vectors and needs \(n\ceil{\log\sigma}(1+o(1))\) bits of space \((n\ceil{H_0(T)}(1+o(1)\) bits of space for the Huffman-shaped wavelet tree).}{
  A fully functional Huffman-shaped wavelet tree \((n\ceil{H_0(T)}(1+o(1)\) bits of space.}
\longversion{There are also wavelet trees for degenerate strings~\cite{AlankoBPV2023SubsetWaveletTrees}.}
In theoretical work, multi-ary wavelet trees have been considered before with the main goal to reduce query time in the RAM model to $\Theta(\log_{\log n} \sigma)$~\cite{FerraginaMMN2007MultiAryWaveletTree}.

\longversion{
\begin{figure*}[t]
  \centering
\begin{tikzpicture}[
    parent anchor=south,
    child anchor=north,
    every node/.style={
        font=\scriptsize,
        shape=rectangle,
        draw,
        align=center,
        minimum height=1.75em,
    },
    level 1/.style={
        level distance=8ex,
        sibling distance=4em,
    },
    level 2/.style={
        level distance=6ex,
        sibling distance=4em,
    },
    level 3/.style={
        level distance=6ex,
        sibling distance=1em,
    },
]
\node(root){\textcolor{gray}{\texttt{accessandselect}}\\\texttt{000011010101001}\\[-.1cm]\texttt{000111001110101}}
  child { node (v3) { \textcolor{gray}{\texttt{accac}}\\\texttt{01101} } }
  child { node (v4) { \textcolor{gray}{\texttt{edee}}\\\texttt{1011} } }
  child { node (v5) { \textcolor{gray}{\texttt{nl}}\\\texttt{10} } }
  child { node (v6) { \textcolor{gray}{\texttt{ssst}}\\\texttt{0001} } };

\end{tikzpicture}
\hspace{2cm}
\begin{tikzpicture}[
    parent anchor=south,
    child anchor=north,
    every node/.style={
        font=\scriptsize,
        shape=rectangle,
        draw,
        align=center,
        minimum height=1.75em,
    },
    level 1/.style={
        level distance=6ex,
        sibling distance=8em,
    },
    level 2/.style={
        level distance=6ex,
        sibling distance=4em,
    },
    level 3/.style={
        level distance=6ex,
        sibling distance=1em,
    },
]
\node(root){\textcolor{gray}{\texttt{accessandselect}}\\\texttt{000011010101001}\\[-.1cm]\texttt{000111001110101}}
    child {
        node (v1) { \textcolor{gray}{\texttt{accacnledeessst}}\\\texttt{011011010110001} }
    };

\node[draw=none,node distance=2cm,right of=root] (z1) { \(\textsf{Z}[\begin{smallmatrix} \texttt{0} \\ \texttt{0} \end{smallmatrix}]=5\) };
\node[draw=none,node distance=1.25cm,right of=z1] (z2) { \(\textsf{Z}[\begin{smallmatrix} \texttt{1} \\ \texttt{0} \end{smallmatrix}]=7\) };
\node[draw=none,node distance=2.07125cm,yshift=-.325cm,right of=root] (z3) { \(\textsf{Z}[\begin{smallmatrix} \texttt{0} \\ \texttt{1} \end{smallmatrix}]=11\) };
\node[draw=none,node distance=1.25cm,right of=z3] { \(\textsf{Z}[\begin{smallmatrix} \texttt{1} \\ \texttt{1} \end{smallmatrix}]=15\) };

\node[draw=none,node distance=2cm,right of=v1] { \(\textsf{Z}[1]=7\) };

\end{tikzpicture}

\caption{4-ary wavelet tree (left) and a 4-ary wavelet matrix (right) for the text \texttt{accessandselect} over the alphabet \(\{\texttt{a}~\textcolor{gray}{(\texttt{000})_2},\texttt{c}~\textcolor{gray}{(\texttt{001})_2},\texttt{d}~\textcolor{gray}{(\texttt{010})_2},\texttt{e}~\textcolor{gray}{(\texttt{011})_2},\texttt{l}~\textcolor{gray}{(\texttt{100})_2},\texttt{n}~\textcolor{gray}{(\texttt{101})_2},\texttt{s}~\textcolor{gray}{(\texttt{110})_2},\texttt{t}~\textcolor{gray}{(\texttt{111})_2}\}\) (bit representation of the characters is given in gray), i.e., the same input text as in \cref{fig:example_wt_wm}.}
  \label{fig:example_4_ary_wt_wm}
\end{figure*}}

\longversion{\subsection*{Alternative Representations of Sequences}}
\longversion{There exist other compressed self-indices that can answer rank and select queries.}
Recently, a practical block tree implementation has been introduced~\cite{BelazzouguiCGGK2021BlockTrees}.
A block tree is especially useful for highly compressible text, as they require only \(O(z\log(n/z))\) words space, where \(z\) is the number of Lempel-Ziv factors of the text.
\longversion{Unfortunately, even highly tuned implementations are slow to compute~\cite{KopplKM2023LPFBlockTrees}.}
Further dictionary-compressed representations allow for rank and select support in optimal time in compressed space~\cite{Prezza2019RankSelectCompressed} with respect to the size of a string attractor~\cite{Prezza2018StringAttractors} of the text.
For a grammar of size \(g\) and an alphabet of size \(\sigma\), rank and select support requires \(O(\sigma g)\) space \cite{BelazzouguiCPT2015GrammarCompressedRankSelect,PereiraNB2017GrammarCompressedRankSelect}.
Here, queries can be answered in \(O(\log n)\) time.

\longandshortversion{\section{4-Ary Wavelet Trees}}{\section{4-Ary Wavelet Trees and Quad Vectors}}
\label{sec:four_ary_wavelet_trees}
When answering queries using a wavelet tree in practice, the query is translated to \(O(\log\sigma)\) binary rank and select queries.
\longversion{Most of the time to answer a query on the wavelet tree is spent answering these binary rank and select queries.}
\longandshortversion{Additionally, on}{On} each level of the wavelet tree, the binary rank and select queries will result in at least one cache miss, which \longversion{again }is where most of the time for answering a binary rank or select query is used for.
To reduce the number of cache misses, we have to reduce the number of levels.
To this end, we make use of 4-ary wavelet trees.
By doubling the number of children, we (roughly) halve the number of levels.
If \(\ceil{\log\sigma}\) is odd, the 4-ary wavelet tree has \(\ceil{\ceil{\log\sigma}/2}\) levels.

In a 4-ary wavelet tree, we represent the characters on each level using two bits that we store in a quad vector, i.e., a vector over the alphabet \(\{0,1,2,3\}\) with access, rank, and select support\longversion{, see \cref{sec:quad_vectors}}.
If \(\ceil{\log\sigma}\) is odd, characters on the last level are represented using a single bit in a bit vector. 
In the first level, each character is now represented by its two MSBs and all characters share a length-0 bit-prefix.
When visiting the first child of a node that represents characters with bit prefix \(\alpha\), its four children represents characters with bit-prefix \(\alpha\texttt{00}\), \(\alpha\texttt{01}\), \(\alpha\texttt{10}\), and \(\alpha\texttt{11}\) \longversion{(first to fourth child)}.
\longversion{Alternatively, you can say that the \(i\)-th child represents characters that are in the \(i\)-th quarter of the alphabet represented by its parent.See \cref{fig:example_4_ary_wt_wm} for an example.}\longandshortversion{

There also exists a ``4-ary'' wavelet matrix representation of the 4-ary wavelet tree.
Here, we also use two bits to represent the characters at each level.
Again, the first level of the wavelet matrix is the same as the first level of the 4-ary wavelet tree.
Then, to compute the next level \(\ell\) starting with the second, the text is stably sorted using the \((2\ell-1)\)-th and \(2\ell\)-th MSBs as key.
As with the ``normal'' wavelet tree and wavelet matrix, this results in the same intervals, where characters with the same bit-prefix are represented, just in a different order.
Also, queries for the wavelet matrix can be adopted to work with the (4-ary) wavelet matrix in the same way we adopted the queries of the wavelet tree.
To do so, we now have to store the exclusive prefix sum of the histogram of all entries of all levels (as a replacement of \(\textsf{Z}\) in the ``normal'' wavelet matrix).}{Similarly to the binary case, there exist 4-ary wavelet matrices.}

\longversion{
\subsection*{Queries in 4-Ary Wavelet Trees}
Querying a 4-ary wavelet tree works similarly to querying a ``normal'' wavelet tree.
Since there are now four children, more book keeping is necessary to keep track of the interval that is visited during the query.
This does not result in more rank and select queries on the quad vectors (and possibly bit vector on the last level).
Overall, the additional book keeping is less expensive than the cache misses on each level as we can clearly see in our experiments in \cref{sec:experimental_evaluation_wavelet_trees}.
Then again, querying the 4-ary wavelet matrix works similarly to querying the ``normal'' wavelet matrix.
See \cref{ex:4ry_rank_wavelet_matrix} for an example.}

\longversion{\section{Quad Vectors}}
\label{sec:quad_vectors}
At the heart of our 4-ary wavelet trees is a space-efficient and fast rank and select data structure for quad vectors.
Our data structure uses a block-based design and follows the popular memory layout for block-based rank and select data structures for bit vectors~\cite{Kurpicz2022RankSelect,ZhouAK2013PopcountRankSelect} adapted to quad vectors.
In a block-based design, the number of occurrences of different symbols is stored for blocks of different size.
The number is stored either for the whole input up to the block or for the input contained in a bigger block.
For our quad vector, we store the following information for each symbol \(\alpha\in\{\texttt{00},\texttt{01},\texttt{10},\texttt{11}\}\):
\longandshortversion{
  \begin{description}[itemsep=-2mm]
  \item[Super Blocks] cover 4096 symbols and store the number of occurrences before the start of the super block.
  \item[Blocks] cover 512 symbols and store the number of occurrences before the start of the block within the super block.
  \end{description}
For each super block, we only have to store seven blocks, as there are no occurrences of any symbol before the first block within the super block, i.e., all counters are zero.
The counter within each block can be stored in just 12 bits, as the maximum number of occurrences of a single symbol within one super block before the last block is 3584 (\(\ceil{\log 3584}=12\)).
Therefore, the counters of the seven blocks fit into 84 bits and can use 44 bits for the counter of the super block, when using 128 bits for both super block and the pertinent blocks.\longversion{\footnote{In practice, computer words have size 8, 16, 32, 64, 128, and on modern machines even 256 and 512 bits. Aligning the size of (super-)blocks with computer words improves the performance.}}
Additionally, storing super blocks and pertinent blocks interleaved, reduces the number of cache misses and allows for the usage of vectorized instructions~\cite{Kurpicz2022RankSelect}.
\longandshortversion{}{We can reduce the number of cache misses by doubling the required space.
A cache line on nearly all hardware has size 64 bytes.
To make a super block and its pertinent blocks fit into one cache line, the number of symbols per super block and block can be halved.%
This doubles the number of counters we have to store, but we also guaranteed at most two cache misses per rank query and three per select query.
However, by doing so, the space-overhead increases to \(12.5\,\%\).}}{
  \emph{Superblocks} cover 4096 (or 2048) symbols and store the number of occurrences before the start of the super block.
  \emph{Blocks} cover 512 (or 256) symbols and store the number of occurrences before the start of the block within the super block.
  The smaller size (super)blocks result in double the space-overhead but halve the cache misses, as the pertinent information fits into one cache line.
Due to the page limit and this being a minor part of this paper based on~\cite{Kurpicz2022RankSelect,ZhouAK2013PopcountRankSelect}, we do not go into full detail here, as details are not required to understand the remaining part of the paper.
We refer to the full paper~\cite{Ceregini2023FasterWTQueriesFullVersion} for a more detailed description.
}

\longversion{
\begin{lemma}
  A quad vector with rank support has a space-overhead of 6.25\,\%.
  Adding select support introduces an additional overhead of \(4\ceil{\log n}/8192\).
\end{lemma}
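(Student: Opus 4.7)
The plan is to verify both overheads by directly counting the bits used by the rank (and then select) structures and comparing them to the size of the raw quad-vector data, which is $2n$ bits.

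\textbf{Rank overhead.} First I analyze a single superblock, which covers $4096$ quads and thus $2 \cdot 4096 = 8192$ raw data bits. Following the block description given just above the lemma, for each of the four symbols $\alpha \in \{00, 01, 10, 11\}$ we store one $128$-bit word consisting of a $44$-bit absolute superblock counter and seven $12$-bit within-superblock block counters ($44 + 7 \cdot 12 = 128$). Summing over the four symbols yields $4 \cdot 128 = 512$ overhead bits per superblock, so the ratio is exactly $512/8192 = 1/16 = 6.25\%$. Along the way I would sanity-check that the $12$-bit width actually suffices, i.e., that the largest intra-superblock count (at most $4096 - 512 = 3584$) fits in $\lceil \log 3584 \rceil = 12$ bits, and that $44$ bits for the absolute counter suffice for all practical $n$.

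\textbf{Select overhead.} For each of the four symbols we maintain one array of select samples storing the position (in $\lceil \log n \rceil$ bits) of every $4096$-th occurrence of that symbol. Since any fixed symbol has at most $n$ occurrences, each such array contains at most $\lceil n/4096 \rceil$ entries, and summing over the four symbols gives a total of at most $4 \cdot \lceil n/4096 \rceil \cdot \lceil \log n \rceil$ extra bits. Dividing by the $2n$ bits of raw data, and reading the overhead as a per-superblock rate so that the ceiling drops out, yields exactly the claimed fraction $4 \lceil \log n \rceil / 8192$.

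\textbf{Main obstacle.} The argument is essentially two arithmetic identities, so there is no real mathematical difficulty. The only points that need care are (i) fixing the reference quantity consistently as the $2n$ raw data bits (rather than the total structure size), and (ii) matching the select sampling rate of $4096$ occurrences per sample so that the constant simplifies to $4 \lceil \log n \rceil / 8192$ rather than to an unrelated-looking denominator. Once these conventions are pinned down, both bounds follow immediately from the block sizes and counter widths described in the preceding paragraph.
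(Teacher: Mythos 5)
Your rank computation is exactly the paper's: \(4\cdot 128/8192 = 6.25\,\%\), with the same decomposition into a 44-bit superblock counter plus seven 12-bit block counters per symbol and the same sanity check that \(\ceil{\log 3584}=12\). The select half, however, reaches the stated formula from different parameters than the structure the paper actually describes. The paper samples every \emph{8192-nd} occurrence of each symbol and stores a 32-bit \emph{block index} (which is why it supports lengths up to \(2^{41}\) quads), computing the overhead directly as \(4\cdot 32/8192\); you instead sample every \emph{4096-th} occurrence, store full \(\ceil{\log n}\)-bit positions, and normalize by the \(2n\) raw data bits. Both manipulations land on \(4\ceil{\log n}/8192\), but they describe different data structures and cannot both be measured against the same baseline: with the paper's 8192 sampling rate, normalizing by \(2n\) bits would give \(4\ceil{\log n}/16384\), so the paper's figure matches the lemma only if the overhead is taken per quad rather than per data bit. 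Your accounting is the one that is internally consistent with the \(2n\)-bit baseline used in the rank half, which is a point in its favor, but since you reverse-engineered the sampling rate to make the constant come out, you should note explicitly that in your reading the 8192 in the select term arises as \(4096\cdot 2\) (sampling period times bits per quad), whereas in the rank term it is the number of data bits per superblock --- the coincidence of the two denominators is not structural.
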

\begin{proof}
  We require 128 bits for each super blocks including its blocks for each symbol.
  Resulting in a space-overhead of \(4\cdot 128/8192=6.25\,\%\).
  Since we store the block of every 8192-nd occurrences of a symbol to answer select queries more efficiently, this introduces another \(4\cdot 32/8192=1.5625\,\%\) space-overhead (allowing select support for quad vectors of length up to \(2^{41}\) quads).
\end{proof}

We can reduce the number of cache misses by doubling the required space.
A cache line on nearly all hardware has size 64 bytes.
To make a super block and its pertinent blocks fit into one cache line, the number of symbols per super block and block can be halved.%
This doubles the number of counters we have to store, but we also guaranteed at most two cache misses per rank query and three per select query.
However, by doing so, the space-overhead increases to \(12.5\,\%\).

In theory, we can save space by storing only information for three symbols and using sophisticated data structures to represent the information.
However, we did not implement the following variant, as preliminary experiments showed that the space-saving features heavily impacted the query performance.%

\begin{lemma}
  A quad vector with rank support requires only 2.41\,\% space-overhead.
\end{lemma}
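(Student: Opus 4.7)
The plan is to build on the construction of the previous lemma by exploiting the identity $rank_\texttt{00}(i) + rank_\texttt{01}(i) + rank_\texttt{10}(i) + rank_\texttt{11}(i) = i$, so that any one of the four rank values can be recovered in constant time from the other three. Hence only three of the four symbols need their own block and super block counters, whereas the previous lemma provisioned counters for all four.

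First, I would apply this observation directly to the layout from the previous lemma: this immediately reduces the overhead from $4 \cdot 128 / 8192 = 6.25\,\%$ to $3 \cdot 128 / 8192 \approx 4.69\,\%$, without any change to query time or memory layout. To push the constant further down to $2.41\,\%$, I would enlarge the super block size $S$ while keeping a block size $B$ such that every within-super-block counter still fits in $\lceil \log S \rceil$ bits. The total space overhead is then bounded by $3 \lceil \log S \rceil / (2B) + 3 \lceil \log n \rceil / (2S)$, where the first term accounts for the block-level counters of the three retained symbols and the second for the super block-level ones (the factor $2B$, resp.\ $2S$, is the number of bits of raw quad-vector that each group of counters amortizes over). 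Choosing $S = \Theta(B \log n)$ and $B$ sufficiently large balances the two terms, and a concrete choice of $(B,S)$ drives their sum below $2.41\,\%$. Shared lookup tables of $o(n)$ bits then suffice to keep rank in $O(1)$ time, contributing no constant to the overhead.

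The main obstacle I expect is the careful bookkeeping needed to pin the constant at exactly $2.41\,\%$ rather than a nearby value: the block-counter packing must avoid wasting bits on machine-word alignment, which is where the ``sophisticated data structures'' mentioned in the preceding paragraph enter---essentially Pătra\c{s}cu-style packed representations of the three-counter tuples, together with, e.g., storing the three counters of a block jointly rather than aligned to a multiple of $8$ bits each. Once a concrete pair $(B,S)$ is fixed, the remainder of the argument is a routine summation of block, super block, and auxiliary contributions, followed by a verification that the resulting ratio matches the claimed $2.41\,\%$.
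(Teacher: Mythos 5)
Your first step---dropping the counters for one of the four symbols via the identity $\sum_{\alpha}\mathit{rank}_\alpha(i)=i$, which brings the overhead from $6.25\,\%$ down to $3\cdot 128/8192\approx 4.69\,\%$---is exactly the paper's first step. Where you diverge is in how the remaining gap to $2.41\,\%$ is closed, and here there is a genuine problem. Your plan is to enlarge the block size $B$ so that the term $3\ceil{\log S}/(2B)$ shrinks; to get below $2.41\,\%$ with $\ceil{\log S}\approx 18$ this forces $B\gtrsim 1000$ quads, i.e., a block spanning many machine words. But then the final in-block prefix count over up to $B$ symbols inherently costs $\Omega(B/w)$ word probes, and no $o(n)$-bit shared lookup table can circumvent that: a table would have to be indexed by the block contents themselves. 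So the claim that ``lookup tables suffice to keep rank in $O(1)$ time'' does not hold for the parameter range your space bound requires, and you are silently trading query time for space. You also never actually produce the constant $2.41\,\%$; you only argue that some $(B,S)$ gets below it.

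The paper takes a different route that keeps the block size at $512$ (hence the same in-block scan cost) and instead compresses the \emph{counters}: the seven within-super-block counters per symbol form a monotone increasing sequence, so they are stored with Elias--Fano coding (at most $141$ bits for all three symbols' sequences instead of $3\cdot 84=252$ bits) while retaining constant-time access to each counter. A layer of mega blocks covering $2^{18}$ symbols absorbs the global counts, so each super block needs only three $18$-bit counters relative to its mega block rather than three $44$-bit absolute ones. This gives $54+141=195$ bits per $8192$-bit super block, which is where the figure $2.41\,\%$ comes from. The missing idea in your proposal is precisely this compression of the monotone counter sequences; without it, the only lever you have is the block size, and pulling it breaks the constant-time guarantee.
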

\begin{proof}
We only save the information for three of the four symbols, as we can compute all information for the fourth symbol using the information of the other three symbols.
Removing the information for one symbol saves 25\,\% of space, hence the space-overhead is now only \(4.6875\,\%\).

Using the Elias-Fano encoding~\cite{Elias74,Fano71}, a monotonic increasing sequence of \(k\) integers in a universe of size \(u\) can be stored using only \(k(2+\log(u/k))\) bits while allowing constant time access to all integers.
Since the number of occurrences of symbols withing super blocks are monotonic increasing sequences, we can use Elias-Fano encoding to store them.
To this end, we introduce \emph{mega blocks} that cover \(2^{18}\) symbols.
We store the number of occurrences of each symbol from the beginning of the text to the beginning of each mega block and encode only the information for the remaining three symbols.
We now can store the following information for each super block:
Three 18-bit counters for three symbols storing the number of occurrences from the beginning of the mega block and the Elias-Fano encoded sequences that require at most 141 bits.
Overall, we require 195 bits for \emph{all three} symbols.
Therefore, this variant has a space overhead of 2.41\,\%.
\end{proof}}

\longversion{
\subsection{Answering Queries}
Answering queries using this approach is similar to the bit vector case.
Assume we want to get the rank of the symbol \(\alpha\) at position \(i\).
We simply have to identify the super block (\(i/4096\)) and the block (\((i \mod 4096)/512\)) where the position occurs in.
Adding up these counters, we only have to scan \((i \mod 512)\) positions within the block and add the number of occurrences of \(\alpha\) in the block up to that position to the result.
All this can be done in constant time.
To find the position where the \(j\)-th \(\alpha\) occurs, we first identify the closest smaller sampled position.
Starting from there, we do scan super blocks until we have identified the super block containing the position.
Then, we continue with scanning block until we have identified the block containing the position.
Finally, we scan the quad vector (within the block) until we have found the correct position and return the index.
While this may not be a constant time query, it is very fast in practice, see~\cref{sec:exeperimental_evaluation}.
}

\colorlet{QWT256rank_latency}{my-dark-red}
\colorlet{QWT256rank_prefetch_latency}{my-dark-red}
\colorlet{QWT256pfsrank_latency}{my-deep-purple}
\colorlet{QWT256pfsrank_prefetch_latency}{my-deep-purple}
\colorlet{QWT512rank_latency}{my-indigo}
\colorlet{QWT512rank_prefetch_latency}{my-indigo}
\colorlet{QWT512pfsrank_latency}{my-teal}
\colorlet{QWT512pfsrank_prefetch_latency}{my-teal}
\colorlet{pasta_wmrank_latency}{my-teal}
\colorlet{sdsl_wmrank_latency}{my-blue}
\colorlet{sdsl_fbb_latency}{my-deep-orange}
\colorlet{sucds}{my-green}

\pgfplotsset{
  QWT256/.style={
    color=QWT256rank_latency,
    mark=x,
  },
  QVec256/.style={
    QWT256,
  },
  QWT256pfs/.style={
    color=QWT256pfsrank_latency,
    mark=o,
  },
  QWT512/.style={
    color=QWT512rank_latency,
    mark=square,
  },
  QVec512/.style={
    QWT512,
  },
  QWT512pfs/.style={
    color=QWT512pfsrank_latency,
    mark=triangle,
  },
  pasta_wm/.style={
    color=pasta_wmrank_latency,
    mark=triangle,
  },
  sdsl_wm/.style={
    color=sdsl_wmrank_latency,
    mark=square,
  },
  sdsl_fbb/.style={
    color=sdsl_fbb_latency,
    mark=o,
  },
  fbb_construction/.style={
    sdsl_fbb,
  },
  sucds/.style={
    color=sucds,
    mark=+,
  },
  pasta_bv/.style={
    pasta_wm,
  },
  sdsl_bv/.style={
    sdsl_wm,
  }
}

\section{Faster Rank Queries with Prefetching}\label{sec:prefetching}

Modern CPUs can issue multiple memory requests concurrently, paving the way for proactive prefetching of cache lines predicted to be accessed in the near future.
By issuing the memory requests for the accessed cache line and the anticipated ones simultaneously, prefecthing helps hiding memory latency and reducing the impact of memory access delays on the CPU's execution pipeline.

Prefetching manifests in two forms: \emph{hardware} and \emph{software} prefetching.
Hardware prefetching is implemented within the CPU's microarchitecture and is driven by the hardware itself.
\longversion{Modern CPUs come equipped with dedicated prefetcher units that monitor memory access patterns and automatically issue prefetch requests based on these patterns.
These units analyze the memory addresses being accessed and attempt to predict future memory accesses.
They then fetch the predicted data into the cache in advance.}
\longversion{For example, the \emph{sequential prefetching} predicts that the next memory location to be accessed will be contiguous to the current one.
It fetches additional cache lines in advance to take advantage of spatial locality. This is particularly effective for array traversal where data tends to be stored sequentially. 
The \emph{strided prefetching} instead looks at the delta between the addresses of the memory 
accesses and looks for patterns within it.
If a consistent pattern in the stride is detected, the CPU fetches cache lines based on this pattern, assuming that the algorithm will continue accessing memory addresses with the same stride.

} Software prefetching, instead, is controlled by the programmer or the compiler through explicit instructions.
\longversion{Programmers can insert prefetch instructions (e.g., \texttt{\_mm\_prefetch} intrinsic for x86 CPUs) into their code to indicate which data should be pref etched and when.}
However, \longandshortversion{software prefetching}{it} requires a deep understanding of the algorithm's memory access patterns and the underlying memory hierarchy, because incorrect or excessive prefetching can lead to performance degradation. 

\begin{algorithm2e}[t]
  \Fn{Rank{\(_\alpha(i)\)}}{

    \(r_0 = i,\;b_0=0\)\;
    \For{\(k=1,\dots,\ell+1\)}{
      \(\alpha_k = (\alpha\;>>\;2 * (\ell - 1 - k))\;\&\;3\), \(\texttt{offset} = C_k[\alpha_k]\)\;
      \(b_k = Q[k].\rank_{\alpha_k}(b_{k-1})+ \texttt{offset}\)\label{ex:prefetch_bk}\;
      \(r_k = Q[k].\rank_{\alpha_k}(r_{k-1})+ \texttt{offset}\)\;
    }
    \Return \(r_\ell-b_{\ell}\)\;
  }
  \vspace{.25cm}
  \caption{Rank query for a \emph{4-ry wavelet matrix} with $\ell$ levels. For level \(k\), $Q[k]$ is the quad vector and \(C_k[\alpha_k]\) is the number of character  $<\alpha_k$ on level \(k\).
    \label{ex:4ry_rank_wavelet_matrix}}
\end{algorithm2e}

The goal of this section is to show how to introduce software prefetching in the 
algorithm of the \rank\ query.
For the following discussion, we give the pseudo code for $\rank_\alpha(i)$ query on a 4-ry wavelet matrix in \cref{ex:4ry_rank_wavelet_matrix}. 
A $\rank_\alpha(i)$ query on a wavelet tree has to traverse
each of the $\ell = \ceil{\ceil{\log \sigma}/2}$ levels. 
At each level $k$, we perform two \rank\ queries on the quad vectors of that level for the character $\alpha_k\in [0,3]$ to compute $b_k$ and $r_k$. These two \rank\ queries use the results $b_{k-1}$ and $r_{k-1}$
of the two \rank\ queries computed at the previous level.
\longversion{Below we discuss only how to perform the prefetching for $r_k$ as the prefetching for $b_k$ can be done in a similar way.

}
Every \rank\ query in a quad vector for a given position $i$ needs to access only two cache lines: the one containing counters for the superblock and block of that position, and the one containing \longversion{the data block with }the $i$-th character. 
These two cache lines can be requested in parallel as they only depend on position $i$.
\longversion{Hence, we observe just the latency of at most one cache miss per level. 
The challenge in eliminating this cache miss is that both the cache lines we need to access at a certain level $k$ depend on the position $r_{k-1}$, which is known only when the \rank\ query at level $k-1$ has been computed.}
\longversion{Solving this challenge could be possible with a predictive model capable of anticipating the cache lines required for ranking at position ${r_{k-1}}$ across all levels $k$, 
way before position ${r_{k-1}}$ is computed\longandshortversion{. Such an advanced prediction}{, as it} would enable us to initiate simultaneous requests for all these cache lines.}

\subsection{Predicting Cache Lines in a Quad Vector}
This challenge led us to the definition of the \emph{Rank with Additive Approximation} problem and our predictive model will take the form of a lightweight data structure.

\begin{definition}
Given a quaternary vector $Q[1,n]$ and fixed an additive error \error, 
the goal is to build a data structure to answer additive approximated rank 
queries. Given a position $i$ and a symbol $\alpha \in [0,3]$, 
$\arank_\alpha(i)$ approximates the correct rank query by returning any arbitrary 
value $\tilde{r}$ within $[r, r + \error]$, where $r=\rank_\alpha(i)$.  
\end{definition}

A prediction model that correctly predicts the needed cache lines of a certain level, is actually solving the Rank with Additive Approximation problem on the quad vector of the previous level with $\error$ equal to the cache line size\longversion{, e.g., $512$ bits ($256$ quads)}
\longandshortversion{. Vice versa, if we have a solution for the problem with the same $\error$, we have a way to predict the required cache lines.}{ and vice versa.}

\begin{lemma}\label{lemma:lb}
Any data structure that solves the Rank with Additive Approximation problem on $Q[1,n]$ with additive error $\error$ needs at least $\Omega(n/\error)$ bits of space.
\end{lemma}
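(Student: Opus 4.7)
The plan is to establish the bound via a standard information-theoretic (encoding) argument. I will exhibit a family of $\Omega(n/\error)$ quad vectors that are pairwise distinguishable by approximate rank queries and conclude that the data structure's bit representation must be able to take at least $2^{\Omega(n/\error)}$ distinct values.

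Concretely, I would fix a block length $b = 3\error$ (any $b > 2\error$ works; I take $3\error$ to leave a clean margin), partition $[1,n]$ into $K = \lfloor n/b \rfloor = \Omega(n/\error)$ consecutive blocks of length $b$, and for every binary string $s \in \{0,1\}^K$ define a quad vector $Q_s$ whose $j$-th block consists of $b$ copies of the symbol $0$ if $s_j = 0$ and of $b$ copies of the symbol $3$ if $s_j = 1$. This produces a family $\mathcal{F}$ of $2^K$ distinct quad vectors. Let $r_j$ denote $\rank_0$ of $Q_s$ evaluated at the right endpoint of block $j$; by construction $r_j - r_{j-1} \in \{0, b\}$, with the value $b$ occurring exactly when $s_j = 0$.

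The second step is decoding. Given any data structure supporting $\arank$ on $Q_s$ with error $\error$, one queries $\tilde{r}_j = \arank_0(jb)$ for $j = 1, \dots, K$. Since $\tilde{r}_j \in [r_j, r_j + \error]$, the consecutive differences satisfy
\[
\tilde{r}_j - \tilde{r}_{j-1} \in [-\error, \error] \text{ when } s_j = 1, \qquad \tilde{r}_j - \tilde{r}_{j-1} \in [2\error, 4\error] \text{ when } s_j = 0,
\]
so thresholding at, say, $3\error/2$ recovers $s_j$ unambiguously. Hence the bit-representation of the data structure (seen as a function from queries to outputs) is injective on $\mathcal{F}$, which forces its length to be at least $\log_2 |\mathcal{F}| = K = \Omega(n/\error)$ bits.

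The only mildly delicate point is choosing the block length so that the two cases produce \emph{strictly} separated intervals for $\tilde{r}_j - \tilde{r}_{j-1}$; anything $b > 2\error$ suffices, so taking $b = 3\error$ trivially clears this obstacle and the rest of the argument is purely counting. No assumption on how $\arank$ is implemented or how the stored bits are organised is needed: the lower bound holds in the general cell-probe / bit-representation sense.
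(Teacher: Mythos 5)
Your proof is correct and rests on the same core idea as the paper's: blocks of $3\error$ copies of a single symbol can be decoded from $\error$-approximate rank queries at their endpoints, so the data structure must distinguish $2^{\Omega(n/\error)}$ inputs and hence occupy $\Omega(n/\error)$ bits. The paper packages this as a contradiction---expanding an arbitrary length-$m$ quad vector by a factor of $3\error$ and invoking the $2m$-bit entropy bound---whereas you count a distinguishable binary family directly, but the argument is essentially identical.
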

\begin{proof}
  Assume by contradiction that there exists a solution for the problem that uses $o(n/\error)$ bits of space for any quad vector of length $n$. 
  Then, we could use this data structure to represent any quad vector with less than $2n$ bits, which is impossible because of an information-theoretical lower bound.

  Given any $Q[1,n]$, we obtain its expanded version of $\hat{Q}[1, 3\error n]$ by replacing each character with a run of $3\error$ of its copies.
  We use the above data structure to index $\hat{Q}$ using $o(n)$ bits of space. 
  Now, we reconstruct $Q$ by querying the data structure for any character at the beginning and the end of each run.
  The correct character in $Q$ can be identified because the results of the two queries differ by at least $2\error$, while the results for the other characters differ by at most $\error$.
\end{proof} 

\begin{lemma}\label{lemma:ub}
There is a data structure with constant query time requiring $\Theta(n/\error)$ bits, i.e., matching the space lower bound, for the Rank with Additive Approximation problem on $Q[1,n]$ with additive error $\error$.
\end{lemma}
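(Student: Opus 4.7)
The plan is to combine a sampling of cumulative counts with Elias--Fano compression. Fix $\error'=\lfloor\error/2\rfloor$ and sample the positions $0, \error', 2\error', \ldots$ of $Q$. For every sample index $j$ and every symbol $\alpha\in\{0,1,2,3\}$, store the coarsened value $R_\alpha[j] = \lfloor \rank_\alpha(j\error')/\error' \rfloor$. A query $\arank_\alpha(i)$ then sets $j=\lfloor i/\error' \rfloor$, retrieves $R_\alpha[j]$, and returns $(R_\alpha[j]+2)\,\error'$. Correctness follows by summing two independent error contributions: the jump from $j\error'$ to $i$ can raise $\rank_\alpha$ by at most $\error'-1$, and the rounding in the definition of $R_\alpha$ hides at most an additional $\error'-1$; hence the returned value lies in $[r, r+2\error']\subseteq [r, r+\error]$ where $r=\rank_\alpha(i)$.

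The nontrivial part is the space bound. Storing each $R_\alpha[j]$ in plain form would cost $\Theta((n/\error)\log(n/\error))$ bits, which overshoots the target by a logarithmic factor. The key observation is that for each fixed $\alpha$, the sequence $R_\alpha[0]\le R_\alpha[1]\le\cdots$ is monotone non-decreasing, has length $k=\Theta(n/\error)$, and is bounded above by $u=\Theta(n/\error)$. Feeding these parameters into the Elias--Fano encoding~\cite{Elias74,Fano71} yields $k\lceil\log(u/k)\rceil + 2k + o(k) = O(n/\error)$ bits per character with constant-time random access to any $R_\alpha[j]$. Summing over the four symbols preserves the $O(n/\error)$ bound and, combined with~\cref{lemma:lb}, gives the tight $\Theta(n/\error)$.

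The main obstacle I anticipate is reconciling the two independent sources of approximation error so that their combined magnitude stays within $\error$ rather than $2\error$; the choice $\error'=\lfloor\error/2\rfloor$ is what accomplishes this, and is the only genuine design decision in the construction. A secondary concern is justifying truly $O(1)$-time access for Elias--Fano, which holds in the standard word-RAM model already assumed throughout the paper (and is in fact exploited elsewhere in the article for a similar purpose).
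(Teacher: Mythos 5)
Your construction is correct and rests on the same core idea as the paper's proof: discretize $Q$ at granularity $\error/2$ and spend half the error budget on the uncertainty within a block and the other half on coarsening the stored count to a multiple of $\error/2$, giving $\Theta(1)$ amortized bits per block and hence $\Theta(n/\error)$ bits total with $O(1)$ query time. The only real difference is the encoding of the coarsened counts: the paper marks, in a bit vector $B_\alpha$ with one bit per block, those blocks containing an occurrence of $\alpha$ whose rank is a multiple of $\error/2$, and recovers the estimate with a single binary $\rank$ query on $B_\alpha$; you store the position-sampled values $\lfloor\rank_\alpha(j\error')/\error'\rfloor$ explicitly and compress the monotone sequence with Elias--Fano, whose constant-time access itself reduces to a select structure on the upper-bit array. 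The two are dual representations of the same step function, so this buys nothing asymptotically and is slightly heavier machinery than the paper's plain bit vector with rank support; on the other hand your version is, if anything, more faithful to the definition, since you return an over-approximation in $[r,r+\error]$ whereas the paper's $\tilde{r}=k\cdot\error/2$ actually under-approximates $r$ (a sign sloppiness the paper tolerates because only $|\tilde{r}-r|\le\error$ matters for prefetching). The single caveat is the degenerate case $\error\le 1$, where $\error'=\lfloor\error/2\rfloor=0$; the paper implicitly assumes $\error$ is a reasonably large even number, and your proof should state the same.
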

\begin{proof}
The idea is to use a bit vector $B_\alpha[1,\ceil{2n/\error}]$, for each of the character  $\alpha \in [0,3]$. We split $Q[1, n]$ into blocks of size $\error/2$. 
The $i$th bit in $B_\alpha$ is set to $1$ if and only if the $i$th block of $Q$ contains the $j$th of $\alpha$, for some $j$ which is a multiple of $\error/2$.

We add the required extra data structure to support $\rank$ queries on the bit vector $B_\alpha$.
A query $\arank_\alpha(i)$ is solved as follows. 
Let $j=\floor{2i/\error}$ be the block in $Q$ that contains our target position $i$.
We compute $k=\rank_1(j-1)$ on the bit vector $B_\alpha$. 
This way, we know that the number of occurrences in $Q$ up to position $i$ is at least
$r \cdot \error/2$. 
Moreover, the exact number of occurrences of $\alpha$ up to the block $j$ is at most $k \cdot \error/2+\error/2-1$. As the $j$th block as size $\error/2$, we conclude that returning $\tilde{r} = k \cdot \error/2$ gives the required estimate.
\end{proof}

\subsection{Predicting Cache Lines in a Wavelet Tree}
Let us consider the rank query $\rank_\alpha(i)$. Consider the rank query $\rank_\alpha(i)$.
For addressing this query through a 4-ary wavelet tree, we divide the character $\alpha$ into its quaternary components $\alpha_1, \alpha_2, \ldots, \alpha_{\ell}$.
Then, at level $k$, we compute $r_k=\rank{\alpha_{k}}(r_k-1)$.
See \cref{ex:4ry_rank_wavelet_matrix}.
As we mentioned above we focus on prefetching for $r_k$s (\cref{ex:prefetch_bk}), as we can deal with $b_{k}$s in a similar way.\longversion{

} The prefetching is possible if can approximate each $r_k$ with $\tilde{r}_{k}$, 
such that $\tilde{r}_{k} \in [r_k, r_k + \error]$ with $\error = 256$.
Indeed, each cache line has size $512$ bits and, thus, spans $256$ positions of the quad vector 
at level $k$. 
The value $\tilde{r}_{k}$ introduces uncertainty only within the span of two 
consecutive cache lines.
Note that prefetching is effective only if we compute the approximated ranks $\tilde{r}_{k}$ for 
all the levels. This way we issue the requests for all the required cache lines in parallel 
before starting to use these cache lines to compute the exact ranks $r_k$.

Unfortunately, solving the Rank with Additive Approximation problem with error $\error$ for the quad vector at each level of the wavelet tree is not enough to guarantee that $\tilde{r}_{k}$ is at most at distance $\error$ from $r_{k}$ (i.e., $\tilde{r}_{k} \in [r_k, r_k + \error]$), for all the levels $k$.
This is because the value $\tilde{r}_{k}$ is computed with an approximated rank at position $\tilde{r}_{k-1}$ because the exact position $r_{k-1}$ is unknown, i.e., we can compute $\arank_{\alpha_k}(\tilde{r}_{k-1})$ and not $\arank_{\alpha_k}(r_{k-1})$.
As the position $\tilde{r}_{k-1}$ is already affected by some error, the errors of our approximations sum up level by level. Thus, at level $k$ the error could be up to $(k-1)\error$. 

We can solve this issue by correcting the approximations at each level. 
This approach is inspired by a solution for the substring occurrence estimation on texts with compressed indexes \cite{ALGO15}.
The main idea is to refine the estimates at each level $k$ with a correction term $\Delta$.
To compute $\Delta$ we need to store a set of \emph{discriminant} positions $D_{k,\alpha}$ for each character $\alpha \in [0,3]$ at level $k$. 

In the solution of \cref{lemma:ub} we store a bit vector $B_\alpha$ for each character $\alpha \in [0,3]$.
A bit was set to one for each position corresponding to an occurrence of $\alpha$ which is a multiple of $\error$. 
The set $D_{k,\alpha}$ consists of the position in the quad vector corresponding to those occurrences. 
\longandshortversion{We note the positions in these sets can be stored within $\Theta(\log \error)$ bits per position in several ways.
The most suitable one for our purposes is to associate each position with its corresponding bit set to one in $B_\alpha$ and store its offset within the corresponding block.}{The positions in these sets can be stored within $\Theta(\log \error)$ bits per position, e.g., by associating each position with its corresponding bit set to one in $B_\alpha$ and store its offset within the corresponding block.}

At query time, given $r_{k-1}$ and the character $\alpha_k$, we want
to compute the discriminant position $d_{k-1}$ which is the successor of $r_{k-1}$
in the set $D_{k,{\alpha_k}}$. This discriminant position can be computed in constant 
time with a \rank\ and a \select\ query on the bit vector of $\alpha$.
Once we computed $d_{k-1}$, the correction term $\Delta$ is $\min(d_{k-1} - \tilde{r}_{k-1}, \error-1)$ and the approximated rank is computed as $\tilde{r}_k = \rank_{\alpha_k}(d_{k-1}) - \Delta$.
This correction is enough to guarantee that our approximations always remain at a distance at most $\error$ from the correct ones over all the levels $k$ of the wavelet tree.

\begin{lemma}
 At any level $k$, we have $\tilde{r}_{k} \in [r_k, r_k+\error)$.
\end{lemma}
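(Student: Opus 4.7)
The proof proceeds by induction on the level $k$. The base case $k=0$ is immediate because $\tilde{r}_0 = r_0 = i$ by initialization, so the difference is trivially zero and sits in $[0,\epsilon)$.

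For the inductive step, I would assume $\tilde{r}_{k-1} \in [r_{k-1}, r_{k-1}+\epsilon)$ and study the quantity $\tilde{r}_k - r_k = N - \Delta$, where $N := \rank_{\alpha_k}(d_{k-1}) - \rank_{\alpha_k}(r_{k-1})$ counts the $\alpha_k$-occurrences in $[r_{k-1}, d_{k-1})$. Because $d_{k-1} \geq \tilde{r}_{k-1} \geq r_{k-1}$ and $\rank$ is monotone, the inequality $N \geq 0$ is automatic, so the statement reduces to showing $\Delta \leq N < \Delta + \epsilon$. Two structural facts about $D_{k,\alpha_k}$ drive the argument: every pair of consecutive elements of $D_{k,\alpha_k}$ brackets exactly $\epsilon$ occurrences of $\alpha_k$ in $Q[k]$, and no element of $D_{k, \alpha_k}$ lies strictly between $\tilde{r}_{k-1}$ and $d_{k-1}$ by the successor property.

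My plan is to split according to which branch of the minimum defining $\Delta = \min(d_{k-1}-\tilde{r}_{k-1},\,\epsilon-1)$ is active. In the uncapped branch, where $d_{k-1}-\tilde{r}_{k-1} \leq \epsilon-1$, let $d'$ denote the predecessor of $d_{k-1}$ in $D_{k,\alpha_k}$ (or $-\infty$ if none exists). The inductive gap $\tilde{r}_{k-1}-r_{k-1} < \epsilon$ combined with the fact that at most $\epsilon-1$ occurrences of $\alpha_k$ fit strictly between $d'$ and $d_{k-1}$ is enough to pin $N$ to $\Delta$ within the required tolerance. In the capped branch, where $\Delta = \epsilon-1$ and $d_{k-1}-\tilde{r}_{k-1} \geq \epsilon$, the width of the window forces $r_{k-1}$ to remain in the same inter-discriminant gap as $\tilde{r}_{k-1}$ (since crossing a discriminant would require $\tilde{r}_{k-1}-r_{k-1} \geq \epsilon$, contradicting the inductive hypothesis), so fact (i) directly caps $N$ at $\epsilon-1$, matching $\Delta$ on one side and forcing $N-\Delta<\epsilon$ on the other.

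The main obstacle I expect is the uncapped branch: the position-based correction $d_{k-1}-\tilde{r}_{k-1}$ is only a proxy for the true number of $\alpha_k$-occurrences, and the non-$\alpha_k$ characters sitting in $[\tilde{r}_{k-1}, d_{k-1})$ introduce slack that must be absorbed by the inductive slack $\delta = \tilde{r}_{k-1}-r_{k-1}$. I plan to resolve this by a careful accounting in which the $\alpha_k$-occurrences in $[r_{k-1},\tilde{r}_{k-1})$ that inflate $N$ balance the non-$\alpha_k$ positions in $[\tilde{r}_{k-1}, d_{k-1})$ that deflate it, using the spacing guarantee between discriminants. The $\epsilon-1$ cap in $\Delta$ is the key design choice: it prevents the additive error from drifting upward across levels, since without it a long non-$\alpha_k$ run could make $\Delta$ overshoot $N$ and break the lower bound, while with the cap the worst-case mismatch at each level stays within a single $\epsilon$-window.
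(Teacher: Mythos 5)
Your reduction of the inductive step to $\Delta \le N < \Delta + \error$, with $N$ the number of occurrences of $\alpha_k$ in $[r_{k-1}, d_{k-1})$, is where the attempt breaks: the lower bound $N \ge \Delta$ is false, and no accounting between the $\alpha_k$-occurrences in $[r_{k-1},\tilde{r}_{k-1})$ and the non-$\alpha_k$ positions in $[\tilde{r}_{k-1},d_{k-1})$ can rescue it. Take $\tilde{r}_{k-1}=r_{k-1}$ exactly (so there is no inductive slack to spend) and a quad vector in which $\alpha_k$ occurs once every $10$ positions; if $\tilde{r}_{k-1}$ sits $60$ positions before the next discriminant $d_{k-1}$, then $\Delta=\min(60,\error-1)=60$ while $N=6$, so $\tilde{r}_k-r_k=N-\Delta=-54$. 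The positional width $d_{k-1}-\tilde{r}_{k-1}$ only ever \emph{upper}-bounds the number of $\alpha_k$'s in the window; nothing lower-bounds that count, so the estimate can undershoot $r_k$ and the containment $\tilde{r}_k\in[r_k,r_k+\error)$ cannot be established this way. Your capped branch has the same flaw: even if $r_{k-1}$ stays in the same inter-discriminant gap as $\tilde{r}_{k-1}$, that does not force $\error-1$ occurrences of $\alpha_k$ into $[r_{k-1},d_{k-1})$.

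The paper avoids this entirely by proving the mirror-image containment: its induction carries $\tilde{r}_{k-1}\le r_{k-1}$ and concludes $\tilde{r}_k\le r_k$ with $r_k-\tilde{r}_k\le\error$, i.e., $\tilde{r}_k\in[r_k-\error,r_k]$. (The interval written in the statement is an orientation slip; the construction of \cref{lemma:ub} also underestimates, returning $k\cdot\error/2$ when the true rank lies in $[k\cdot\error/2, k\cdot\error/2+\error)$.) With that orientation, $d_{k-1}$, the successor of $\tilde{r}_{k-1}$, may land on either side of $r_{k-1}$, and the paper splits on that ordering rather than on which branch of the $\min$ is active: in one case the error is $\Delta-z$ with $0\le z\le\Delta\le\error-1$, in the other it is $z+\Delta\le r_{k-1}-\tilde{r}_{k-1}\le\error$. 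Both cases need only upper bounds on occurrence counts in a window --- one from the window length, one from the $\error$-spacing of discriminants --- which is exactly the information the data structure provides. Flipping your inductive invariant to the underestimating side makes your $N-\Delta$ computation go through essentially as the paper's does; as stated, your plan is attempting to prove something that is not true.
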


\begin{proof}
  The proof is by induction on $k$. 
  For the first level $k=1$, as at the beginning $\tilde{r_0} = r_0$, we have $r_1 \in [r_1, r_1 +\error)$ by \cref{lemma:ub}.
  For general $k$, we assume that $\tilde{r}_{k-1} \in [r_{k-1}, r_{k-1}+\error)$, and we prove $\tilde{r}_{k} \in [r_{k}, r_{k}+\error)$. 
  We want to prove that $\tilde{r}_{k} \leq r_k$ and $r_k - \tilde{r}_{k} \leq \error$.
  There are two cases based on the relationship between $d_{k-1}$ and $r_{k-1}$.
  By definition we know that $\tilde{r}_{k-1} \leq d_{k-1}$ and by inductive hyphotesis $\tilde{r}_{k-1} \leq r_{k-1}$.
  
  The first case is $r_{k-1} \leq d_{k-1}$.
  Thus, we have $\tilde{r}_{k-1} \leq r_{k-1} \leq d_{k-1}$.
  Let $z$ be the number of occurrences of the ranked character $\alpha_k$ in the interval $[r_{k-1}, d_{k-1}]$.
  \longandshortversion{
  $$
  \begin{array}{lcl}
    r_k - \tilde{r}_{k} & = & \rank_{\alpha_k}(r_{k-1}) - \arank_{\alpha_k}(\tilde{r}_{k-1})\\
                              & = & \rank_{\alpha_k}(d_{k-1}) - z - (\rank_{\alpha_k}(d_{k-1}) - \Delta) \\
                              & = & \Delta - z \leq \error \\
  \end{array}
  $$}{
  Now, we have \(r_k - \tilde{r}_{k}=\rank_{\alpha_k}(r_{k-1}) - \arank_{\alpha_k}(\tilde{r}_{k-1}) = \rank_{\alpha_k}(d_{k-1}) - z - (\rank_{\alpha_k}(d_{k-1}) - \Delta) = \Delta - z \leq \error\).}
  The last inequality follows by $[r_{k-1}, d_{k-1}]$ being contained in $[\tilde{r}_{k-1},d_{k-1}]$, bounding $z$ by the minimum of the length of $[\tilde{r}_{k-1},d_{k-1}]$ and $\error-1$.
  If the interval is larger than $\error-1$, there cannot be more than $\error-1$ of $\alpha_k$ since we sampled a discriminant position every $\error$ occurrences of $\alpha_k$.
  It also follows that $z \leq \Delta$ and, thus, $\tilde{r}_{k} \leq r_k$.

  The second case is $d_{k-1} < r_{k-1}$.
  Thus, $\tilde{r}_{k-1} \leq d_{k-1} \leq r_{k-1}$.
  Let $z$ be the number of occurrences of $\alpha_k$ in the interval $[r_{k-1}, d_{k-1}]$.
  \longandshortversion{
  $$
  \begin{array}{lcl}
    r_k - \tilde{r}_{k} & = & \rank_{\alpha_k}(r_{k-1}) - \arank_{\alpha_k}(\tilde{r}_{k-1})\\
                              & = & \rank_{\alpha_k}(d_{k-1}) + z - (\rank_{\alpha_k}(d_{k-1}) - \Delta) \\
                              & = & z + \Delta \leq r_{k-1} - \tilde{r}_{k-1} \leq \error  \\
  \end{array} 
  $$}{
Now, we have \(r_k - \tilde{r}_{k} = \rank_{\alpha_k}(r_{k-1}) - \arank_{\alpha_k}(\tilde{r}_{k-1})
                              = \rank_{\alpha_k}(d_{k-1}) + z - (\rank_{\alpha_k}(d_{k-1}) - \Delta)
                              = z + \Delta \leq r_{k-1} - \tilde{r}_{k-1} \leq \error\).
  }
  The first inequality follows by observing that $\Delta$ is at most the distance between 
  $\tilde{r}_{k-1}$ and $d_{k-1}$ and $z$ is at most the distance between $d_{k-1}$ and
  $r_{k-1}$. So, their sum is at most $r_{k-1} - \tilde{r}_{k-1}$. The last inequality is by inductive hypothesis.
\end{proof}

The space required by this predicting data structure is $\Theta((n/\error)\log \error)$ for each level of the wavelet tree.
So, the overall space usage is $\Theta((n\log \sigma/\error)\log \error)$ bits.
As we mentioned above, prefetching with the above data structure can be done by setting $\error=256$.
However, we are left with an issue.
If the indexed sequence is too large, the predicting data structure itself does not fit in the cache and, thus, to avoid cache misses in the wavelet tree we would pay cache misses in the predicting data structure. 
This issue could be solved by introducing a hierarchy of predictors in which a predictor at a specific level takes on the responsibility of prefetching the necessary cache lines for the subsequent-level predictor.
Each predictor allows an error that is roughly $\error$ times less than the one at the next level, until the predictor at the head of the hierarchy fits in cache.
Unfortunately, a larger hierarchy becomes impractical quite soon for two reasons. 
First, to fully exploit prefetching we would have to request all the predicted cache lines in parallel, and current CPUs can issue only 5--10 memory requests in parallel.
Second, each level of the hierarchy introduces a cost of $\Theta(\log \sigma)$ to the query. 

\longandshortversion{\subsection{Practical Implementations}}{\paragraph{Practical Implementations.}}
In our implementation, we relaxed the previous solution in several respects.
First, we do not use the correcting term $\Delta$ and the discriminant positions. 
This is because in our tests we used sequences with an alphabet size $\sigma$ up to $256$, which requires a wavelet tree of at most $4$ levels.
Thus, the error growth here is very limited and it can be afforded by prefetching more cache lines.
Second, we limit the hierarchy to just two levels of predictors. 
The first one implements the solution of \cref{lemma:ub} with error $\error=2048$. 
For the second level, we observe that super block and block counters can be used as a variant of the solution of \cref{lemma:ub} with error $\error=256$.
\longversion{Even if is more space inefficient, it is preferred because the space is already needed by the wavelet tree implementation.}
This way, we can use the first level to predict the super block containing $r_k$ for each level and prefetch the cache lines containing the counters of those super blocks and their blocks. 
Then, we use these counters to refine the predictions to prefetch the cache lines with the correct blocks of data in the quad vectors.
\longversion{

  The first level of the hierarchy has to store a bit vector of size $n'=n/2048$ bits for each 
character $\alpha\in [0,3]$ in each level of the wavelet tree. We enhance this bit vector 
with \rank\ support with a space overhead of $n'/4$ bits \cite{Vigna2008BroadwordRankSelect}.
So, the overhead of the predicting data structure is $\ell(4n/2048+n/2048) = 5\ell n/2048$ bits, 
where $\ell$ is the number of levels in the 4-ary wavelet tree.
The predictor at the second level of the hierarchy does not introduce any extra space overhead. 
Observe that the first predictor fits in a $32$ MB L3 cache for sequences up to $\approx 100$ Billions of characters over an alphabet of size $256$.}
\longandshortversion{

We conclude by observing that cache lines needed by \access\ and \select\ queries cannot be predicted with proposed solutions.
The reason is that each query on a certain level has a double dependency on the result of the previous one. 
Indeed, both position and symbol are known only when the previous query is solved.}{
 Cache lines needed by \access\ and \select\ queries cannot be predicted with our solution, as for each level there is a double dependency (position and symbol) on the result of the previous level.
}

\section{Experimental Evaluation}
\label{sec:exeperimental_evaluation}
\longversion{
We first discuss our experimental setup.
Then, in \cref{sec:experimental_evaluation_wavelet_trees}, we show the benefit of using 4-ary wavelet trees and approximate rank queries.
Finally, in \cref{subsec:expquad}, we compare quad and bit vectors.

\subsection*{Experimental Setup}}
For our experiments, we used a machine equipped with two AMD EPYC 7713 \longversion{(64 cores, 2 threads per core, 2\,GHz base clock, and cache sizes: 64\,KB L1I and 64\,KB L1D per core, 512\,KB L2I+D per core, and 256\,MB L3I+D, with 32\,MB per 8 cores called \emph{core complex})} and 2\,TB DDR4 RAM running Ubuntu 20.04.3 LTS kernel version 5.4.0-155.
All experiments were performed using a single thread, with hyperthreading and turbo boost disabled.
C++ code of competitors was compiled with GCC 11.1.0 with flags \texttt{03} and \texttt{march=native} and Rust code was compiled using \texttt{cargo build --release}.
Our Rust implementation is available at \url{https://github.com/rossanoventurini/qwt}.
We ran each experiment ten times (10M queries for each run) and report the average running time.
\longversion{We generate all queries in advance as follows.
Let $S[0,n)$ be the indexed sequence.
Each \access\ query asks to access the symbol at a random position in the sequence.
For \rank\ queries, we generate a random position $i \in [0, n)$ and use $\langle i, S[i] \rangle$ as a \rank\ query.
For \select\ queries, we select a symbol $c$ at random following their distribution in the sequence, i.e., more frequent symbols have a higher probability of being selected.
Then, we generate a random value $r \in [1, occ(c)]$, where $occ(c)$ is the number of occurrences of $c$, and use $\langle r, c \rangle$ as a \select\ query.}

\longversion{
\begin{figure*}
  \centering
   \makebox[\textwidth]{
  \input{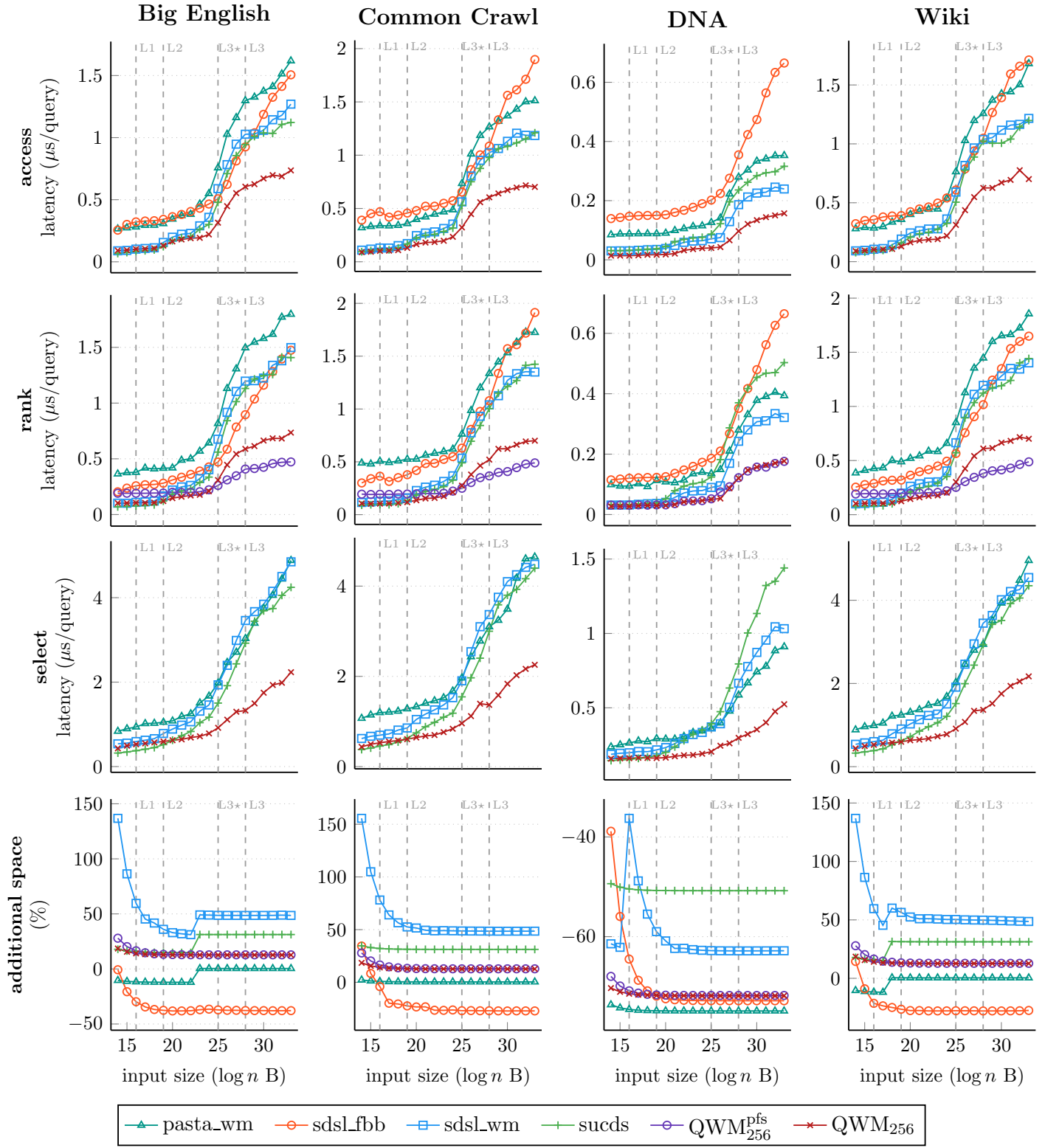}}
  \ref{leg:query_latency}
\caption{Comparison of our new wavelet trees with competitors.
    The first three rows give the access, rank, and select query latency of the implementations, and the last row shows the space overhead of the wavelet tree compared to the input (storing one character per byte).
    The vertical dashed gray lines indicate the L1, L2, L3\(\star\), and L3 cache sizes of the CPU used for these experiments (L3\(\star\) indicates the L3 cache size per core complex).}
  \label{fig:big_wm_latency_comparison}
\end{figure*}}

\begin{table}
  \caption{\emph{Latency} of access, rank, and select queries (row 1--3) given in \(\mu s\) and the \emph{space} (row 4) is given in GiB.
    The small number in parentheses is the \emph{speedup of \wtqvectorpfs{256}} over the method represented by the column.
  All results for 8\,GiB input files.}
  \label{tab:summary_latency}
  \centering

\begin{tabular}{llr@{\hspace{8pt}}rr@{\hspace{8pt}}rr@{\hspace{8pt}}rr@{\hspace{8pt}}rr@{\hspace{8pt}}r}
  \toprule
  & input & \multicolumn{2}{c}{\wtsdsl} & \multicolumn{2}{c}{sdsl\_fbb} & \multicolumn{2}{c}{\wtpasta}  & \multicolumn{2}{c}{sucds} & \multicolumn{2}{c}{\wtqvectorpfs{256}}\\
  \midrule
  \multirow{4}{*}{\rotatebox[origin=c]{90}{access\vphantom{lk}}}
   & English & 1270 & {\footnotesize(1.7\(\times\))} & 1506 & {\footnotesize(2.1\(\times\))} & 1618 & {\footnotesize(2.2\(\times\))} & 1122 & {\footnotesize(1.5\(\times\))} & 731 & {\footnotesize(1.0\(\times\))} \\
   &      CC & 1185 & {\footnotesize(1.7\(\times\))} & 1897 & {\footnotesize(2.7\(\times\))} & 1511 & {\footnotesize(2.2\(\times\))} & 1210 & {\footnotesize(1.7\(\times\))} & 700 & {\footnotesize(1.0\(\times\))} \\
   &     DNA &  239 & {\footnotesize(1.5\(\times\))} &  665 & {\footnotesize(4.2\(\times\))} &  353 & {\footnotesize(2.2\(\times\))} &  316 & {\footnotesize(2.0\(\times\))} & 157 & {\footnotesize(1.0\(\times\))} \\
   &    Wiki & 1216 & {\footnotesize(1.7\(\times\))} & 1712 & {\footnotesize(2.4\(\times\))} & 1681 & {\footnotesize(2.4\(\times\))} & 1198 & {\footnotesize(1.7\(\times\))} & 712 & {\footnotesize(1.0\(\times\))} \\
  \midrule
  \multirow{4}{*}{\rotatebox[origin=c]{90}{rank}}
   & English & 1498 & {\footnotesize(3.2\(\times\))} & 1474 & {\footnotesize(3.1\(\times\))} & 1797 & {\footnotesize(3.8\(\times\))} & 1408 & {\footnotesize(3.0\(\times\))} & 472 & {\footnotesize(1.0\(\times\))} \\
   &      CC & 1350 & {\footnotesize(2.8\(\times\))} & 1913 & {\footnotesize(3.9\(\times\))} & 1725 & {\footnotesize(3.5\(\times\))} & 1424 & {\footnotesize(2.9\(\times\))} & 490 & {\footnotesize(1.0\(\times\))} \\
   &     DNA &  321 & {\footnotesize(1.8\(\times\))} &  665 & {\footnotesize(3.8\(\times\))} &  394 & {\footnotesize(2.2\(\times\))} &  503 & {\footnotesize(2.9\(\times\))} & 176 & {\footnotesize(1.0\(\times\))} \\
   &    Wiki & 1402 & {\footnotesize(2.9\(\times\))} & 1649 & {\footnotesize(3.4\(\times\))} & 1855 & {\footnotesize(3.8\(\times\))} & 1442 & {\footnotesize(3.0\(\times\))} & 488 & {\footnotesize(1.0\(\times\))} \\
  \midrule
  \multirow{4}{*}{\rotatebox[origin=c]{90}{select}}
   & English & 4849 & {\footnotesize(2.2\(\times\))} & \multicolumn{1}{c}{---} & \multicolumn{1}{c}{---} & 4882 & {\footnotesize(2.2\(\times\))} & 4245 & {\footnotesize(1.9\(\times\))} & 2229 & {\footnotesize(1.0\(\times\))} \\
   &      CC & 4483 & {\footnotesize(2.0\(\times\))} & \multicolumn{1}{c}{---} & \multicolumn{1}{c}{---} & 4646 & {\footnotesize(2.1\(\times\))} & 4396 & {\footnotesize(1.9\(\times\))} & 2260 & {\footnotesize(1.0\(\times\))} \\
   &     DNA & 1032 & {\footnotesize(2.0\(\times\))} & \multicolumn{1}{c}{---} & \multicolumn{1}{c}{---} &  910 & {\footnotesize(1.7\(\times\))} & 1440 & {\footnotesize(2.8\(\times\))} &  521 & {\footnotesize(1.0\(\times\))} \\
   &    Wiki & 4546 & {\footnotesize(2.1\(\times\))} & \multicolumn{1}{c}{---} & \multicolumn{1}{c}{---} & 4956 & {\footnotesize(2.3\(\times\))} & 4349 & {\footnotesize(2.0\(\times\))} & 2185 & {\footnotesize(1.0\(\times\))} \\
  \midrule
  \midrule
  \multirow{4}{*}{\rotatebox[origin=c]{90}{space}}
  
   & English & \multicolumn{2}{r}{11.9} & \multicolumn{2}{r}{5.0} & \multicolumn{2}{r}{8.0} & \multicolumn{2}{r}{10.5} & \multicolumn{2}{r}{9.0} \\
   &      CC & \multicolumn{2}{r}{11.9} & \multicolumn{2}{r}{5.8} & \multicolumn{2}{r}{8.0} & \multicolumn{2}{r}{10.5} & \multicolumn{2}{r}{9.0} \\
   &     DNA &  \multicolumn{2}{r}{3.0} & \multicolumn{2}{r}{2.2} & \multicolumn{2}{r}{2.0} &  \multicolumn{2}{r}{3.9} & \multicolumn{2}{r}{2.3} \\
   &    Wiki & \multicolumn{2}{r}{11.9} & \multicolumn{2}{r}{5.8} & \multicolumn{2}{r}{8.0} & \multicolumn{2}{r}{10.5} & \multicolumn{2}{r}{9.0} \\

\bottomrule
\end{tabular}

\end{table}

\longversion{
\subsection{Evaluation of 4-Ary Wavelet Trees}}
\label{subsec:expwt}
\label{sec:experimental_evaluation_wavelet_trees}
\longversion{We compare our 4-ary wavelet trees with other wavelet trees.}
Note that we compare wavelet \emph{matrices} if available, as those are faster in practice than wavelet trees\longversion{, however, all implementations mentioned below contain also support for both wavelet trees}.
\longversion{To the best of our knowledge, there exists no other \(k\)-ary wavelet tree implementation for \(k>2\) and no other (uncompressed) wavelet tree implementation with access, rank, and select support.}
In the following, \emph{\wtsdsl} denotes wavelet matrices built on bit vectors of the SDSL library (\texttt{wm\_int})~\cite{GogBMP2014SDSL}.
We also included the fastest compressed wavelet tree implementation in the SDSL---\emph{sdsl\_fbb}~\cite{GogKKPP2019FixedBlockBoosting}.
A wavelet matrix implementation built on bit vectors of the PASTA-toolbox library, using the most space-efficient rank and select data structures~\cite{Kurpicz2022RankSelect}, is denoted by \emph{\wtpasta}.
Additionally, \emph{sucds} is the wavelet matrix implementation in the sucds library\footnote{\url{https://github.com/kampersanda/sucds}, last accessed 2023-11-08.}.
\emph{\wtqvector{256}} and \emph{\wtqvector{512}} are our implementations of wavelet matrices built on quad vectors with blocks of size 256 and 512 symbols per block, cf. \cref{sec:four_ary_wavelet_trees}.
\emph{\wtqvectorpfs{}} denotes the usage of our predictive model (see \cref{sec:prefetching}).
We wanted to include a wavelet matrix based on learned compressed rank and select data structures~\cite{BoffaFV2022LearnedRankSelect}, however\longversion{, here query times are significantly greater\longversion making the plots harder to read.
Furthermore}, the experiments for inputs \(>1\)\,GiB did not finish in reasonable time.
\longversion{Therefore, we excluded this implementation from the results.}

As inputs, we use text prefixes between 16\,KiB and 8\,GiB in size, generated from the following datasets.
\emph{\english} is the concatenation of all 35\,750 English text files from the Gutenberg Project\longandshortversion{.
We removed the headers related to the project, leaving just the real text.}{ without project related headers.}
\emph{\dna} are FASTQ files from the 1000 Genomes Project, where we considered only the raw sequence and kept only the character \texttt{A}, \texttt{C}, \texttt{G}, and \texttt{T}\longversion{, to obtain a very small alphabet}.
\emph{\cc} is a concatenation of the WET files of Common Crawl corpus,\longandshortversion{ i.e., a web crawl without HTML tags.
Here, we removed all meta information added by the corpus.}{ without project related headers.}
\emph{\wiki} is a concatenation of XML data of the English Wikipedia from June 2023.
\longversion{Note that we did not use the famous Pizza\&Chili corpus, as we needed inputs larger than 2\,GiB.}

\longversion{
\begin{figure*}
  \centering
 \makebox[\textwidth]{
  \input{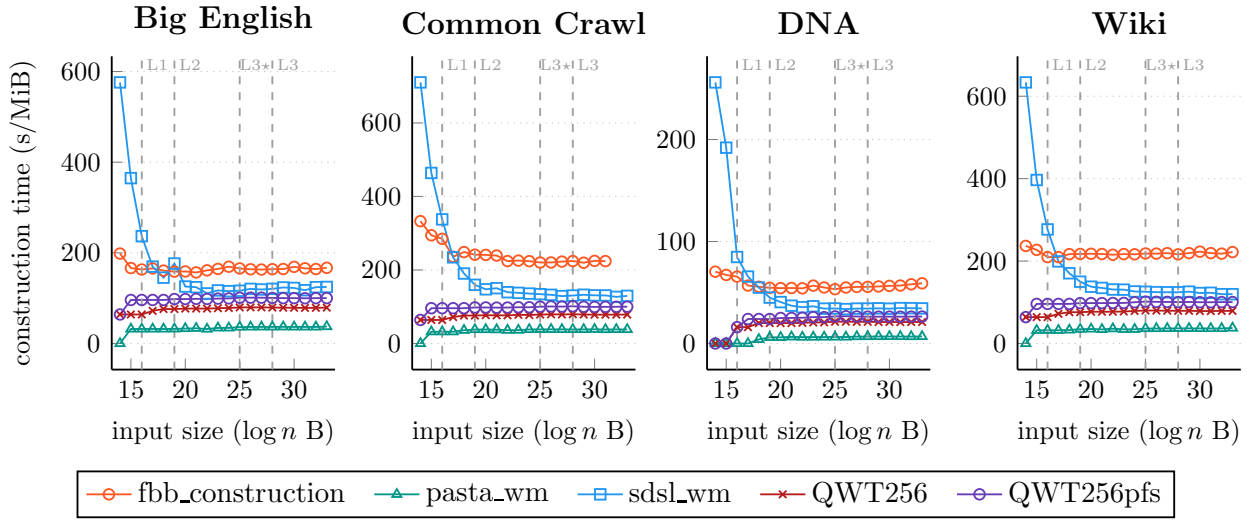}}
\ref{leg:qwm_space}
  \caption{Wavelet tree construction times normalized by input size.
    Vertical dashed gray lines indicate the L1, L2, L3\(\star\), and L3 cache sizes of the CPU used for these experiments (L3\(\star\) indicates the L3 cache size per core complex).\label{fig:wm_construction_time}}
\end{figure*}}

\longandshortversion{
\subsubsection*{Access, Rank, and Select Queries}
The first three rows of \cref{fig:big_wm_latency_comparison} show the query result of our experimental evaluation.
In the main part, we only list results for our wavelet tree with block size 256 as they are never slower than those with block size 512, see \cref{fig:big_wm_self_comparison}.
Overall, sucds almost always at most as fast as \wtsdsl, therefore, to save space, we only discuss \wtsdsl\ in the following.
For \emph{access} and \emph{select} queries, the behavior of all algorithms is very similar.
Here, \wtpasta\ is always the slowest.
For inputs for which the wavelet tree fits into the L2 cache, \wtsdsl\ and \wtqvector{256} have a similar query time.
This is to be expected, as there are still not many cache misses during querying.
There is a steep increase in query time as soon as the wavelet tree does not fit into the L3 cache of the core complex.
However, \wtqvector{256}'s query time does not increase as fast, resulting in a speedup of up to 1.73 (access) and 2.17 (select) compared to \wtsdsl.

Finally, for \emph{rank} queries the behavior of the three previously discussed algorithms is similar.
But here, we also have our wavelet tree \wtqvectorpfs{256} using our prediction mode.
We can actually see it being effective as soon as the wavelet trees doe not fit into the L3 cache of the core complex, i.e., as soon as we expect more and more cache misses during querying.
Here, \wtqvectorpfs{256} becomes faster than \wtqvector{256}, resulting in a speedup of up to 1.55 compared to our wavelet tree without prediction model and up to 3.17 compared to \wtsdsl.
A summary of the results for the largest inputs can be found in \cref{tab:summary_latency}.

In \cref{fig:big_wm_self_comparison}, we show a comparison of our different wavelet tree configurations.
Here, we can see that for access and select queries the query time is very similar for \wtqvector{256} and \wtqvector{512}.
However, for rank queries, \wtqvectorpfs{256} is always faster than \wtqvectorpfs{512}.
Therefore, we only include \wtqvector{256} and \wtqvectorpfs{256} in the main part of this paper.
Note that the space requirements are as expected for all wavelet tree variants.

In \cref{fig:throughput_comparison}, we show a comparison of the throughput of all tested wavelet trees.
Here, we can see that our approach has a similar benefit.
We currently cannot explain the outliers in our measurements for small inputs, however, they do not occur on other hardware and are not easy to reproduce even on our experimental setup.
We are sure that no other heavy process was running at the time of the experiment.
}{
  \paragraph{Experimental Results.}
  Due to space constraints, we mainly consider the latency of \access , \rank , and \select\ by forcing the input of each query to depend on the output of the previous one.
  This is consistent with the in real settings, e.g., the backwards search.
  For a very thorough evaluation, we refer to the full paper~\cite{Ceregini2023FasterWTQueriesFullVersion}, where we also give the throughput and show results for different input sizes.
  Additionally, we only consider \wtqvectorpfs{256} here, as this version is the overall fastest.
  For a comparison of different block sizes (with and without predictive model, please see the full paper).

  We report a summary of our experimental results for inputs of size 8\,GiB in \cref{tab:summary_latency}.
  There, we can see that our new wavelet tree is always the fastest.
  For \access\ and \select\ queries, we achieve a speedup of 1.5--2.2 compared to \wtsdsl, the second fastest wavelet tree.
  When using our predictive model for \rank\ queries, wie can improve this speedup up to 3.2.
  For small alphabets, e.g., \dna, the predictive model provides no advantage, as there is only one level in our 4-ary wavelet tree.
  The reported speedups are in line with other implementations, e.g., sucds, which provides slightly slower \rank\ queries than \wtsdsl.
  
  The space requirements of all wavelet trees are also unsurprising.
  The compressed wavelet tree sdsl\_fbb requires the least space, the space efficient implementation \wtpasta\ requires just a little bit more than the input size, and our new solution is also very space efficient.
  Both, \wtsdsl\ and sucds require slightly more space due to the underlying rank and select data structures.

  Overall, our new wavelet tree provides impressive speedups compared to all other available wavelet tree implementations.
  It is also very space-efficient, i.e., only compressed wavelet trees require significantly less space.
  In the future, we want to integrate our predictive model in compressed wavelet trees.
}
\longversion{
\begin{figure*}
  \centering
   \makebox[\textwidth]{
     \input{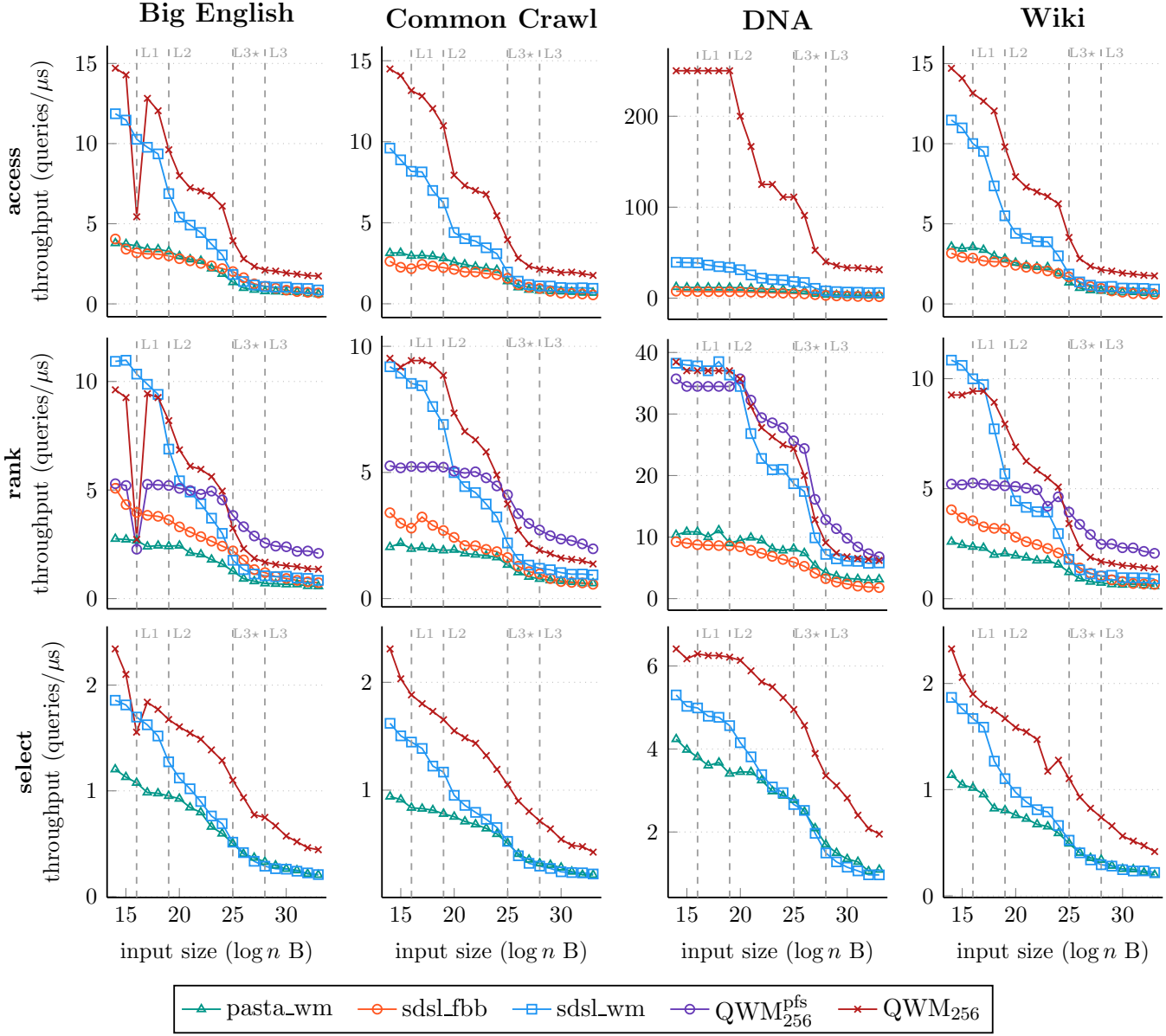}}
     \ref{leg:query_througput}
\caption{Comparison of the \emph{throughput} our new wavelet trees with competitors.
  The vertical dashed gray lines indicate the L1, L2, L3\(\star\), and L3 cache sizes of the CPU used for these experiments (L3\(\star\) indicates the L3 cache size per core complex).
  The additional space is the same as reported in our latency experiments in \cref{fig:big_wm_latency_comparison}.\label{fig:throughput_comparison}}
\end{figure*}
}

\longversion{
\subsubsection*{Space-Overhead and Construction}
The last row in \cref{fig:big_wm_latency_comparison} shows the additional space required by the wavelet tree compared to the input.
Additional space \(<0\,\%\) is due to the compression of the alphabet, where the wavelet tree requires less than 8 levels and the input requires one byte per character.
Similarly, stark increases are due to a new level being requires, which is not the case for our wavelet trees, as we do not use a bit vector for the last level, even if it would suffice, initially wasting some space in the progress.
The spikes in the beginning are due to the general overhead of the data structures.
Overall, the space-overhead is mostly based on the used rank and select data structures.
Since our quad vectors support rank and select using less memory than the bit vectors in the SDSL, our wavelet trees are also more memory efficient---around 75\,\% less space-overhead.
Without surprise, \wtpasta\ is the most space-efficient wavelet tree, however, this comes with slower (wavelet tree) queries.

The construction time of the wavelet trees is shown in \cref{fig:wm_construction_time}.
Here, \wtpasta\ is the fastest to construct, our \wtqvector{256}\ and \wtqvectorpfs{256}\ require similar time but the additional time required for the prediction model is visible, and \wtsdsl\ is the slowest to construct.
Note that we are not focusing on the construction and highly tuned construction algorithms for wavelet trees exist~\cite{DinklageFKT2023SIMDWT}.
}

\longversion{
\subsection{Evaluation of Quad Vectors}
\label{subsec:expquad} 
We now compare our quad vectors with the bit vectors used in the wavelet trees in \cref{sec:experimental_evaluation_wavelet_trees} to show that the speedup is actually due to the improvements presented in this paper and not only due to better rank and select support for quad vectors.
\bvsdsl\ is the implementation of bit vectors of the SDSL library \cite{GogBMP2014SDSL}.
For rank queries, the bit vector is enhanced with {\sf rank\_support\_v}, and it uses {\sf select\_support\_mcl} to support \select\ queries, i.e., Clark and Munro's approach to compute \select\ queries~\cite{ClarkM1996Select}.
\bvpasta\ is the implementation of bit vectors of the PASTA library \cite{Kurpicz2022RankSelect}.%
The bit vector is enhanced using the {\sf FlatRankSelect} data structure to support \rank\ and \select\ queries.
\qvector{256} \textnormal{and} \qvector{512} are our implementations of quad vectors with blocks of size 256 and 512 symbols (see \cref{sec:quad_vectors}).

In these experiments, we generate random bit/quad sequences containing from 16\,K to 8\,G symbols, i.e., the number of symbols contained in each level of the wavelet trees.
Each bit has a 50\,\% chance of being set and each quad has a 25\,\% percent chance of appearing.

}

\longversion{
\begin{figure*}[t]
  \centering
   \makebox[\textwidth]{
  \begin{tabular}{rrrr}
  \begin{tikzpicture}[trim axis right]
    \begin{axis}[
      plotLatencySmall,
      title={\textbf{access}},
      ylabel={\begin{tabular}{c}
                latency (ns/query)
              \end{tabular}},
            legend to name={leg:qvec_latency},
            legend style={font=\small},
            legend columns=4,
            xlabel={input size (\(\log n\) symbols)},
            ]
\addplot[pasta_bv] coordinates { (14.0,9.52117) (15.0,9.55379) (16.0,9.55736) (17.0,9.56679) (18.0,10.2893) (19.0,11.8868) (20.0,12.7143) (21.0,13.31) (22.0,18.514) (23.0,23.5627) (24.0,29.1869) (25.0,32.8885) (26.0,35.3779) (27.0,41.4306) (28.0,62.2165) (29.0,91.8615) (30.0,111.129) (31.0,125.027) (32.0,139.893) (33.0,147.965) };
\addlegendentry{pasta\_bv};
\addplot[sdsl_bv] coordinates { (14.0,12.1518) (15.0,11.5547) (16.0,11.5432) (17.0,11.5961) (18.0,12.2582) (19.0,13.8922) (20.0,14.6933) (21.0,15.19) (22.0,18.7986) (23.0,25.5438) (24.0,32.6397) (25.0,34.9444) (26.0,36.2459) (27.0,45.3676) (28.0,63.6737) (29.0,95.1103) (30.0,113.288) (31.0,147.688) (32.0,142.547) (33.0,146.162) };
\addlegendentry{sdsl\_bv};
            
\addplot[QVec256] coordinates { (14.0,9) (15.0,9) (16.0,9) (17.0,9) (18.0,11) (19.0,12) (20.0,12) (21.0,16) (22.0,24) (23.0,30) (24.0,32) (25.0,33) (26.0,41) (27.0,59) (28.0,87) (29.0,111) (30.0,126) (31.0,137) (32.0,145) (33.0,152) };
\addlegendentry{QVec256};
\addplot[QVec512] coordinates { (14.0,9) (15.0,9) (16.0,9) (17.0,9) (18.0,11) (19.0,12) (20.0,12) (21.0,16) (22.0,26) (23.0,30) (24.0,32) (25.0,33) (26.0,43) (27.0,61) (28.0,89) (29.0,112) (30.0,138) (31.0,137) (32.0,137) (33.0,153) };
\addlegendentry{QVec512};

 \end{axis}
\end{tikzpicture}&
   \begin{tikzpicture}[trim axis right]
    \begin{axis}[
      plotLatencySmall,
      title={\textbf{rank}},
      xlabel={input size (\(\log n\) symbols)},
    ]

\addplot[pasta_bv] coordinates { (14.0,41.129) (15.0,41.1744) (16.0,41.1561) (17.0,40.8653) (18.0,41.1103) (19.0,41.147) (20.0,41.4692) (21.0,41.6291) (22.0,46.0854) (23.0,51.9721) (24.0,58.4905) (25.0,62.5566) (26.0,65.0099) (27.0,72.3063) (28.0,99.4893) (29.0,137.127) (30.0,161.168) (31.0,175.224) (32.0,193.465) (33.0,200.153) };
\addlegendentry{pasta\_bv};
\addplot[sdsl_bv] coordinates { (14.0,17.5005) (15.0,16.915) (16.0,17.5796) (17.0,17.1031) (18.0,18.6856) (19.0,19.8607) (20.0,20.3698) (21.0,21.1818) (22.0,31.2244) (23.0,38.1892) (24.0,41.239) (25.0,42.234) (26.0,44.2543) (27.0,67.734) (28.0,102.147) (29.0,129.037) (30.0,143.204) (31.0,177.841) (32.0,162.796) (33.0,166.417) };
\addlegendentry{sdsl\_bv};
    
\addplot[QVec256] coordinates { (14.0,13) (15.0,13) (16.0,13) (17.0,14) (18.0,14) (19.0,15) (20.0,16) (21.0,21) (22.0,29) (23.0,34) (24.0,36) (25.0,38) (26.0,48) (27.0,73) (28.0,104) (29.0,130) (30.0,143) (31.0,155) (32.0,160) (33.0,169) };
\addlegendentry{QVec256};
\addplot[QVec512] coordinates { (14.0,23) (15.0,23) (16.0,23) (17.0,23) (18.0,24) (19.0,24) (20.0,24) (21.0,27) (22.0,37) (23.0,39) (24.0,43) (25.0,44) (26.0,55) (27.0,78) (28.0,113) (29.0,134) (30.0,147) (31.0,160) (32.0,164) (33.0,176) };
\addlegendentry{QVec512};

\legend{};
\end{axis}
\end{tikzpicture}&
  \begin{tikzpicture}[trim axis right]
    \begin{axis}[
      plotLatencySmall,
      title={\textbf{select}},
      xlabel={input size (\(\log n\) symbols)},
      ]
\addplot[pasta_bv] coordinates { (14.0,125.57) (15.0,128.851) (16.0,129.833) (17.0,128.96) (18.0,130.686) (19.0,131.1) (20.0,132.491) (21.0,131.976) (22.0,133.858) (23.0,141.477) (24.0,148.669) (25.0,155.068) (26.0,165.971) (27.0,177.12) (28.0,203.362) (29.0,254.557) (30.0,300.505) (31.0,328.006) (32.0,395.122) (33.0,400.721) };
\addlegendentry{pasta\_bv};
\addplot[sdsl_bv] coordinates { (14.0,60.6929) (15.0,60.4381) (16.0,67.1124) (17.0,63.9424) (18.0,66.0586) (19.0,68.033) (20.0,71.1421) (21.0,73.6309) (22.0,78.1386) (23.0,96.7554) (24.0,113.672) (25.0,126.166) (26.0,136.052) (27.0,153.273) (28.0,187.467) (29.0,242.55) (30.0,304.605) (31.0,386.517) (32.0,428.08) (33.0,453.139) };
\addlegendentry{sdsl\_bv};
      
\addplot[QVec256] coordinates { (14.0,104) (15.0,107) (16.0,107) (17.0,107) (18.0,108) (19.0,108) (20.0,110) (21.0,116) (22.0,123) (23.0,129) (24.0,136) (25.0,147) (26.0,159) (27.0,189) (28.0,230) (29.0,279) (30.0,297) (31.0,367) (32.0,388) (33.0,443) };
\addlegendentry{QVec256};
\addplot[QVec512] coordinates { (14.0,103) (15.0,110) (16.0,110) (17.0,111) (18.0,111) (19.0,111) (20.0,112) (21.0,117) (22.0,126) (23.0,129) (24.0,135) (25.0,139) (26.0,153) (27.0,179) (28.0,221) (29.0,254) (30.0,273) (31.0,315) (32.0,357) (33.0,403) };
\addlegendentry{QVec512};

\legend{};
\end{axis}
\end{tikzpicture}&
  \begin{tikzpicture}[trim axis right]
  \begin{axis}[
    plotLatencySmall,
    title={\textbf{space usage}},
    xlabel={input size (\(\log n\) symbols)},
    ylabel={\begin{tabular}{c}
              additional space (\%)
            \end{tabular}},
          ]
          \addplot[pasta_bv] coordinates { (14.0,1.00098) (15.0,0.524902) (16.0,0.286865) (17.0,0.167847) (18.0,0.108337) (19.0,0.0785828) (20.0,0.0637054) (21.0,0.0562668) (22.0,0.0525475) (23.0,0.0506878) (24.0,0.049758) (25.0,0.049293) (26.0,0.0490606) (27.0,0.0489444) (28.0,0.0488862) (29.0,0.0488572) (30.0,0.0488427) (31.0,0.0488354) (32.0,0.0488318) (33.0,0.0488299) };
          \addlegendentry{pasta\_bv};
          \addplot[sdsl_bv] coordinates { (14.0,6.87866) (15.0,6.43005) (16.0,6.27899) (17.0,19.3382) (18.0,13.0688) (19.0,9.76009) (20.0,7.99971) (21.0,7.07374) (22.0,6.59206) (23.0,6.33787) (24.0,6.20715) (25.0,6.13964) (26.0,6.10643) (27.0,6.09075) (28.0,6.08327) (29.0,6.08178) (30.0,6.08185) (31.0,6.08366) (32.0,6.08596) (33.0,6.08868) };
          \addlegendentry{sdsl\_bv};
          
          \addplot[QVec256] coordinates { (14.0,4.58984) (15.0,3.85742) (16.0,3.515625) (17.0,3.34473) (18.0,3.25928) (19.0,3.21655) (20.0,3.19519) (21.0,3.18451) (22.0,3.17917) (23.0,3.1765) (24.0,3.17516) (25.0,3.1745) (26.0,3.17416) (27.0,3.174) (28.0,3.17391) (29.0,3.17387) (30.0,3.17385) (31.0,3.17384) (32.0,3.17383) (33.0,3.17383) };
          \addlegendentry{QVec256};
          \addplot[QVec512] coordinates { (14.0,3.02734) (15.0,2.29492) (16.0,1.953125) (17.0,1.78223) (18.0,1.69678) (19.0,1.65405) (20.0,1.63269) (21.0,1.62201) (22.0,1.61667) (23.0,1.614) (24.0,1.61266) (25.0,1.612) (26.0,1.61166) (27.0,1.6115) (28.0,1.61141) (29.0,1.61137) (30.0,1.61135) (31.0,1.61134) (32.0,1.61133) (33.0,1.61133) };
          \addlegendentry{QVec512};

          \legend{};
        \end{axis}
      \end{tikzpicture}
\end{tabular}}
  \ref{leg:qvec_latency}
   \caption{Comparison of access, rank, and select query latency for our \emph{quad} vectors with \emph{bit} vectors.\label{fig:bv_comparison}}
   \vspace{-.1cm}
 \end{figure*}
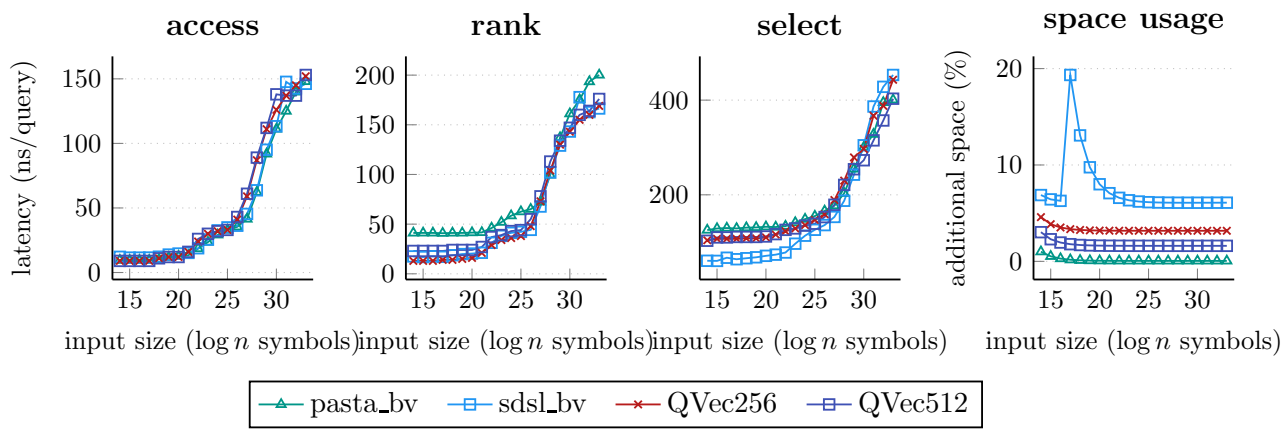
 }

 \longversion{
The running time of access, rank, and select queries of the bit vectors is depicted in \cref{fig:bv_comparison}.
There, we can see that access queries as well as rank and select queries for larger inputs require roughly the same time.
However, for all input sizes, \bvsdsl\ requires roughly the same time (rank) or is faster (select) than \qvector{256}, the quad vector used in the comparison of the wavelet trees.

The additional space usage is as expected and similar to the wavelet trees.
Again, \bvsdsl\ has a spike for smaller inputs, which can be explained by the general overhead of the data structure starting at this input size.

Overall, the evaluation of the bit and quad vectors show that the improvements of three wavelet tree queries considered here are solely due to the algorithmic ideas presented in this paper.
}

\longversion{
\begin{figure*}
  \centering
   \makebox[\textwidth]{
  \input{figures/big_self_comparison.tex}}
  \ref{leg:query_latency}
\caption{Comparison of all our new wavelet trees implementations.
    The first three rows give the access, rank, and select query latency of the implementations, and the last row shows the space overhead of the wavelet tree compared to the input (storing one character per byte).
    The vertical dashed gray lines indicate the L1, L2, L3\(\star\), and L3 cache sizes of the CPU used for these experiments (L3\(\star\) indicates the L3 cache size per core complex).}
  \label{fig:big_wm_self_comparison}
\end{figure*}
}

\longversion{
\section{Conclusion}
We have presented two improvements for wavelet trees that achieve a speedup for access and select queries of up to 2 and for rank queries---which are the most important queries in many applications---up to 3.
To this end, we first changed the underlying tree structure from a binary tree to a 4-ary tree, reducing the number of cache misses in the process.
Furthermore, we introduced the Rank with Additive Approximation problem and showed a small predictive model that solved this problem in practice.
\longversion{This predictive models allows us to prefetch all data necessary for rank queries, resulting in a better speedup compared to access and select queries.}

It remains an open problem to combine the 4-ary wavelet tree layout with the sublinear construction algorithm based on vectorized instructions.
Another interesting line of future research are compressed 4-ary wavelet trees and compressed quad vectors.
\longversion{Additionally, we want to explore bit vectors for even larger alphabets, as our experiments indicate that the techniques proposed in this paper benefit from wavelet trees with more levels.}}

\longandshortversion{
  \bibliography{literature}
}{
  \bibliography{literature_short}
}

\end{document}